\def\@tocline#1#2#3#4#5#6#7{\relax
  \ifnum #1>\c@tocdepth 
  \else
    \par \addpenalty\@secpenalty\addvspace{#2}%
    \begingroup \hyphenpenalty\@M
    \@ifempty{#4}{%
      \@tempdima\csname r@tocindent\number#1\endcsname\relax
    }{%
      \@tempdima#4\relax
    }%
    \parindent\z@ \leftskip#3\relax \advance\leftskip\@tempdima\relax
    \rightskip\@pnumwidth plus4em \parfillskip-\@pnumwidth
    #5\leavevmode\hskip-\@tempdima
      \ifcase #1
       \or\or \hskip 1em \or \hskip 2em \else \hskip 3em \fi%
      #6\nobreak\relax
      \dotfill
      \hbox to\@pnumwidth{\@tocpagenum{#7}}
    \par
    \nobreak
    \endgroup
  \fi}
\newtheorem{theorem}{Theorem}[section]
\newtheorem{proposition}[theorem]{Proposition}
\newtheorem{remark}[theorem]{Remark}
\newtheorem{lemma}[theorem]{Lemma}
\newtheorem{corollary}[theorem]{Corollary}
\numberwithin{equation}{section}
\newcommand{\ii}{\infty}
\newcommand\R{{\ensuremath {\mathbb R} }}
\newcommand\C{{\ensuremath {\mathbb C} }}
\newcommand\N{{\ensuremath {\mathbb N} }}
\newcommand\1{{\ensuremath {\mathds 1} }}
\newcommand\nn{\nonumber}
\renewcommand\phi{\varphi}
\newcommand{\bH}{\mathbb{H}}
\newcommand{\gH}{\mathfrak{H}}
\newcommand{\gS}{\mathfrak{S}}
\newcommand{\wto}{\rightharpoonup}
\newcommand{\cM}{\mathcal{M}}
\newcommand{\cF}{\mathcal{F}}
\newcommand{\gF}{\mathfrak{F}}
\newcommand{\cN}{\mathcal{N}}
\newcommand{\cH}{\mathcal{H}}
\newcommand{\eps}{\epsilon}
\renewcommand{\epsilon}{\varepsilon}
\newcommand\pscal[1]{{\ensuremath{\left\langle #1 \right\rangle}}}
\newcommand{\norm}[1]{ \left| \! \left| #1 \right| \! \right| }
\DeclareMathOperator{\tr}{{\rm Tr}}
\DeclareMathOperator{\Tr}{{\rm Tr}}
\newcommand{\supp}{{\rm Supp}}
\renewcommand{\ge}{\geqslant}
\renewcommand{\le}{\leqslant}
\renewcommand{\geq}{\geqslant}
\renewcommand{\leq}{\leqslant}
\title[Anharmonic oscillator based measures]{Gibbs measures based on 1D (an)harmonic oscillators as mean-field limits}
\author[M. Lewin]{Mathieu LEWIN}
\address{CNRS \& CEREMADE, Universit\'e Paris-Dauphine, PSL Research University, Place de Lattre de Tassigny, F-75016 PARIS, France} 
\email{mathieu.lewin@math.cnrs.fr}
\author[P.~T. Nam]{Phan Th\`anh NAM}
\address{Masaryk University, Department of Mathematics and Statistics, Kotl\'a\v rsk\'a 2, 
61137 Brno, Czech Republic} 
\email{ptnam@math.muni.cz}
\author[N. Rougerie]{Nicolas ROUGERIE}
\address{Universit\'e Grenoble 1 \& CNRS,  LPMMC (UMR 5493), B.P. 166, F-38042 Grenoble, France}
\email{nicolas.rougerie@grenoble.cnrs.fr}
\date{February, 2018}
\begin{document}

\begin{abstract}
We prove that Gibbs measures based on 1D defocusing nonlinear Schr\"odinger functionals with sub-harmonic trapping can be obtained as the mean-field/large temperature limit of the corresponding grand-canonical ensemble for many bosons. The limit measure is supported on Sobolev spaces of negative regularity and the corresponding density matrices are not trace-class. The general proof strategy is that of a previous paper of ours, but we have to complement it with Hilbert-Schmidt estimates on reduced density matrices.
\end{abstract}

\maketitle

\setcounter{tocdepth}{2}
\tableofcontents

\section{Introduction}

Gibbs measures based on nonlinear Schr\"odinger energy functionals play a central role in constructive quantum field theory (CQFT)~\cite{GliJaf-87,Simon-74,DerGer-13,VelWig-73} and in the low-regularity probabilistic Cauchy theory of nonlinear Schr\"odinger (NLS) equations~\cite{CacSuz-14,Bourgain-94,Bourgain-96,Bourgain-97,BouBul-14a,BouBul-14b,BurThoTzv-10,LebRosSpe-88,ThoOh-15,ThoTzv-10,Tzvetkov-08}. They also are the natural long-time asymptotes for nonlinear dissipative stochastic PDEs~\cite{BouDebFuk-17,PraDeb-03,RocZhuZhu-16,TsaWeb-16}. Recently, we have shown that, at least in the most well-behaved cases, they can be derived from the linear many-body quantum mechanical problem. Namely, many-body bosonic thermal equilibrium states converge in a certain mean-field/large-temperature limit~\cite{LewNamRou-14d,LewNamRou-ICMP,Rougerie-xedp15} to nonlinear Gibbs measures (see the recent~\cite{FroKnoSchSoh-17} for a corresponding time-dependent statement). The goal of this note is to extend this result to the case of somewhat less well-behaved measures, e.g. those based on the 1D harmonic oscillator studied in~\cite{BurThoTzv-09,BurThoTzv-10,BouDebFuk-17}.

\medskip

Consider the NLS flow on $\R ^{d+1}$
\begin{equation}\label{eq:NLS}
 i\partial_t u = -\Delta u + V u + \left(w * |u| ^2\right) u, 
\end{equation}
with $V$ a trapping potential and $w$ an interaction potential (say a delta function). A natural candidate for an invariant measure under~\eqref{eq:NLS} can be defined formally in the manner 
$$ \mu (du) = \frac{1}{z_r} \exp\left(- \frac{1}{2} \iint_{\R^d \times \R^d} |u(x)| ^2 w (x-y) |u(y)| ^2 dxdy \right) \mu_0 (du)$$
with $z_r$ a normalization constant, and  
$$ \mu_0 (du) = \exp\left( -\int_{\R ^d} |\nabla u | ^2 + V |u| ^2 \right)du$$
the free Gibbs (gaussian) measure associated\footnote{I.e., with covariance $(-\Delta + V) ^{-1}$} with $-\Delta +V$. The program of defining and studying the Schr\"odinger flow on the support of $\mu$ has been initiated in~\cite{LebRosSpe-88}, then pursued by many authors and extended to other nonlinear dispersive equations. The first result of measure invariance for a NLS equation is in~\cite{Bourgain-94}.

It is well-known that the free Gibbs measure $\mu_0$ is supported on function spaces of low regularity. This is the main source of difficulty in the definition of the interacting measure $\mu$ and the proof of its invariance under the NLS flow. This is also an important issue as regards the derivation of nonlinear Gibbs measures from many-body quantum mechanics. In~\cite{LewNamRou-14d} we were able to fully control the mean-field limit only when 
\begin{itemize}
 \item [\textbf{(a)}] the gaussian measure is supported at least on $L^2 (\R^d)$;
 \item [\textbf{(b)}] its reduced density matrices are trace-class operators on $L^2 (\R ^d)$;
 \item [\textbf{(c)}] consequently, the construction of the interacting Gibbs measure is straightforward.
\end{itemize}
Essentially this limited us to the 1D case $d=1$ with $-\Delta + V = -\partial_x ^2 + |x| ^s, \: s>2$ (the problem set on a bounded interval is included as the formal case $s=\infty$). In higher dimensions, we were able to derive nonlinear Gibbs measures only for very smooth interaction operators. Multiplication operators by $w(x-y)$ as above, a fortiori by $\delta_0 (x-y)$, were not allowed.

In dimensions $d\geq 2$, properties \textbf{(a)} and \textbf{(b)} fail and a replacement for \textbf{(c)} necessitates a renormalization scheme, a minima a Wick ordering. This has been carried out decades ago in CQFT, see~\cite{GliJaf-87,Simon-74,DerGer-13} for general references. More recently, the corresponding renormalized measures have been shown to be invariant under the (properly renormalized) NLS flow~\cite{Bourgain-96,Bourgain-97,ThoOh-15}. The derivation of these renormalized measures from many-body quantum mechanics is an open problem. The state of the art in this direction is contained in~\cite{FroKnoSchSoh-16} where it has been shown that suitable modifications of bosonic Gibbs states based on renormalized Hamiltonians do converge to the desired measure. Completing the same program for the true Gibbs states remains an important challenge.

\medskip

In this note we address a particular case where 
\begin{itemize}
 \item [\textbf{(d)}] the gaussian measure is \emph{not} supported on $L^2 (\R^d)$;
 \item [\textbf{(e)}] its reduced density matrices are \emph{not} trace-class operators;
 \item [\textbf{(f)}] \emph{nevertheless}, no renormalization is needed to make sense of the interacting measure.
\end{itemize}
In fact, the gaussian measures we shall consider live on some $L^p (\R^d)$, for some $p>2$. That their reduced density matrices are not trace-class has to do with a lack of decay at infinity, rather than a lack of local regularity. 

This situation is somewhat intermediate between the ideal ``trace-class case'', solved in~\cite{LewNamRou-14d}, and the ``Wick renormalized case'', partially solved in~\cite{FroKnoSchSoh-16}. That the 1D harmonic oscillator case $-\Delta + V = -\partial_x ^2 + |x| ^2$ satisfies \textbf{(d)} and \textbf{(f)} has been observed in~\cite{BurThoTzv-10} and used to develop a low-regularity probabilistic Cauchy theory for the 1D nonlinear Schr\"odinger equation. Here we expain that \textbf{(d)} and \textbf{(f)} in fact hold in the case $-\Delta + V = -\partial_x ^2 + |x|^s, s>1 $ and derive the corresponding measures from many-body quantum mechanics. The main point to adapt the strategy of~\cite{LewNamRou-14d} is to overcome the problem posed by \textbf{(e)}. Indeed, the trace-class topology of reduced density matrices (related to moments of the particle number) is the most natural one to pass to the mean-field limit in a many-body quantum problem. The main addition of the present paper is that we are able to work in weaker topologies (namely, the Hilbert-Schmidt and local trace class topologies), to pass to the limit and complete the program of~\cite{LewNamRou-14d}.

\medskip

\noindent\textbf{Acknowledgments:} We received financial support from the French ANR project ANR-13-JS01-0005-01 (N.~Rougerie).

\section{Main result}

We consider the $N$-body quantum Hamiltonian
\begin{equation}\label{eq:ham many}
H_N = \sum_{j=1} ^N h_j + \lambda \sum_{1\leq i < j \leq N} w (x_i-x_j) 
\end{equation}
acting on 
$$\gH _N = L_{\rm sym}^2 (\R^N) \simeq \bigotimes ^N _{\rm sym} L^2 (\R) = \bigotimes ^N _{\rm sym} \gH,$$
the Hilbert space for $N$ bosons\footnote{The assumption of bosonic symmetry is essential. Without it, the mean-field limit of Gibbs states is very different~\cite[Section~3]{LewNamRou-ICMP}.} on the real line, with the symmetric tensor product
$$
f_1\otimes_{\rm sym}\cdots\otimes_{\rm sym} f_N=\frac{1}{\sqrt{N!}}\sum_{\sigma \in S_N}f_{\sigma(1)}\otimes\cdots \otimes f_{\sigma(N)}, \quad \forall f_1,...,f_N\in \gH.
$$
In the above $h_j$ stands for $h$ acting on variable $j$, where 
\begin{equation}\label{eq:harm osc}
h = - \partial_x ^2 + V (x) 
\end{equation}
with a potential $V$ satisfying 
\begin{equation}\label{eq:asum V}
V (x) \geq C^{-1} |x| ^s, \quad s > 1, \quad C >0 . 
\end{equation}
We assume that the interaction potential $w$ is repulsive (defocusing) and decays fast enough at infinity:
\begin{equation}\label{eq:asum w}
0\leq w = w_1+w_2,\qquad w_1\in\cM,\qquad w_2\in L^p(\R) \text{ with } 1\leq p<\frac{1}{(2-s)_+},
\end{equation}
where $\cM$ is the set of bounded (Radon) measures. It is well-known~\cite{ReeSim2} that, under these assumptions, $H_N$ makes sense as a self-adjoint operator on $L^2 (\R ^N)$. The measure part $w_1$ can include a delta function, which is relatively form-bounded with respect to the Laplacian because of the Sobolev embedding. The coupling constant $\lambda \geq 0$ will be scaled appropriately in dependence of the particle number $N$ to make the interaction sufficiently weak for the mean-field approximation to become asymptotically exact. 

Our starting point is the grand-canonical Gibbs state at temperature $T>0$
\begin{equation}\label{eq:Gibbs state}
\Gamma_{\lambda,T} := \frac{\exp\left( - T ^{-1} \bH_\lambda \right)}{\tr_\gF \left[ \exp\left( - T ^{-1} \bH_\lambda \right)\right]}
\end{equation}
where $\bH_\lambda$ is the second quantized version of~\eqref{eq:ham many}:
\begin{equation}\label{eq:ham fock}
\bH_\lambda = \bigoplus_{N=0} ^\infty H_N 
\end{equation}
acting on the bosonic Fock space 
\begin{align}\label{eq:Fock}
\gF &= \C \oplus \gH \oplus \gH_2 \oplus \ldots \oplus \gH_N \oplus \ldots  \nonumber\\
 &= \C \oplus L^2 (\R) \oplus L_{\rm sym}^2 (\R ^2) \oplus \ldots \oplus L^2_{\rm sym} (\R ^N) \oplus \ldots  
\end{align}
The Gibbs state is the unique minimizer over mixed grand canonical states (self-adjoint positive operators on $\gF$ having trace $1$) of the free energy functional
\begin{equation}\label{eq:free ener}
\cF_{\lambda,T} [\Gamma] = \tr_{\gF} \left[ \bH_\lambda \Gamma \right] + T \tr_{\gF} \left[ \Gamma \log \Gamma \right] 
\end{equation}
and the minimum equals 
$$ F_{\lambda,T} = - T \log Z_{\lambda,T}, \quad Z_{\lambda,T} = \tr_\gF \left[ \exp\left( - T ^{-1} \bH_\lambda \right)\right].$$
The method of~\cite{LewNamRou-14d} that we adapt here is variational, based on this minimization principle. To see that $\Gamma_{\lambda,T}$ is indeed the unique solution, observe that for any other state $\Gamma$
\begin{equation}\label{eq:min principle}
 \cF_{\lambda,T} [\Gamma] = \cF_{\lambda,T} \left[\Gamma_{\lambda,T} \right] + T \tr_{\gF} \left[ \Gamma\left( \log \Gamma - \log \Gamma_{\lambda,T} \right)\right]. 
\end{equation}
The last quantity in the right-hand side is the von Neumann relative entropy. It is positive, and equals zero if and only if $\Gamma = \Gamma_{\lambda,T}$, see e.g.~\cite{OhyPet-93,Wehrl-78}. 

We are going to consider the mean-field limit: $T\to \infty$ (corresponding roughly to a large particle number limit) and 
$$\lambda=T^{-1}.$$
The objects that will have a natural limit for large $T$ are the reduced density matrices $\Gamma_{\lambda,T} ^{(k)}$, i.e. the operators on the $k$-particles space $\gH _k$ defined by
\begin{equation}
\Gamma_{\lambda,T}^{(k)}=\sum_{n\geq k}{n \choose k} \tr_{k+1\to n} \left[G_{\lambda,T}^n\right].
\label{eq:def_DM_partial}
\end{equation}
Here $G_{\lambda,T} ^n $ is the projection of $\Gamma_{\lambda,T}$ on the $n$-particle sector $\gH _n$ and $\tr_{k+1\to n}$ is the partial trace taken over the symmetric space of $n-k-1$ variables. Equivalently, we have
\begin{equation}\label{eq:GC DM}
 \tr_{\gH _k} \left[ A_k \Gamma_{\lambda,T}^{(k)}\right]=\sum_{n\geq k} {n\choose k} \tr_{\gH _n} \left[A_k \otimes_{\rm sym} \1 ^{\otimes (n-k)}\,G_{\lambda,T} ^n \right]
\end{equation}
for every bounded operator $A_k$ on $\gH_k$, where 
\begin{equation}
A_k\otimes_{\rm sym}\1^{\otimes{n-k}} = {n\choose k}^{-1}\sum_{1\leq  i_1< \cdots< i_k\leq n}(A_k)_{i_1...i_k}
 \label{eq:A_k}
\end{equation}
and $(A_k)_{i_1,...,i_k}$ acts on the $i_1,..,i_k$-th variables.

The limiting object is the nonlinear Gibbs measure 
\begin{align}\label{eq:NL Gibbs}
d\mu (u) = \frac{1}{z_r} \exp\left( - F_{\rm NL} [u] \right) d\mu_0 (u)
\end{align}
with the nonlinear interaction term 
\begin{align*}
F_{\rm NL} [u] = \frac{1}{2}\iint_{\R \times \R} |u(x)| ^2 w (x-y) |u(y)| ^2 dx dy,
\end{align*}
the relative partition function
\begin{align*}
z_r = \int \exp\left( - F_{\rm NL} [u] \right) d\mu_0 (u),
\end{align*}
and the gaussian measure $\mu_0$ associated with $h$. We refer to Section~\ref{sec:measures} for details, the main points being that 
\begin{itemize}
\item $\mu_0$ can be defined as a measure over $\bigcap_{t<1/2-1/s}\, \gH^{t}$, where $\gH^t$ is the Sobolev-like space 
\begin{equation}\label{eq:Sobolev}
\gH ^t := \left\lbrace u  = \sum_{n= 0} ^\infty \alpha_n u_n \ \big| \ \sum_{n=0} ^\infty \lambda_n ^t |\alpha_n| ^2 < \infty \right\rbrace 
\end{equation}
for $t\in \R$, and the spectral decomposition of $h$ reads\footnote{Using Dirac's bra-ket notation $|u_n\rangle \langle u_n|$ for the orthogonal projector onto $u_n$.} 
\begin{equation}\label{eq:harm osc spec}
h = \sum_{n=0} ^\infty \lambda_n |u_n \rangle \langle u_n | 
\end{equation}
\item $u\mapsto F_{\rm NL} [u]$ is finite $\mu_0$-almost surely, so that $\mu$ is well-defined as a probability measure.
\end{itemize}

To state our main result, we recall a convenient convention from~\cite{LewNamRou-14d}, namely that, for a one-body operator $A$ on $\gH$, we denote $A^{\otimes n}$ the operator on $\gH _n = \bigotimes_{\rm sym}^n \gH$ acting as 
$$ A ^{\otimes n} \left( \varphi_1 \otimes_{\rm sym} \otimes \ldots \otimes_{\rm sym} \varphi_ n \right) =  A \varphi_1 \otimes_{\rm sym} \otimes \ldots \otimes_{\rm sym} A \varphi_ n.$$

The goal of this note is to prove the following:

\begin{theorem}[\textbf{Derivation of Gibbs measures based on (an)harmonic oscillators}]\label{thm:main}
Let $\lambda = T ^{-1}$ and $T\to \infty$. Then, we have the convergence of the relative partition function
\begin{align} \label{eq:rel-partition}
\frac{Z_{\lambda,T}}{Z_{0,T}} = \frac{\tr_\gF \left[ \exp\left( - T ^{-1} \bH_\lambda \right)\right]}{\tr_\gF \left[ \exp\left( - T ^{-1} \bH_0 \right)\right]}  \to z_r >0.
\end{align}
Moreover, for any $k\geq 1$,  
\begin{equation}\label{eq:main DM}
\frac{k!}{T ^k} \Gamma ^{(k)}_{\lambda,T} \to \int |u ^{\otimes k} \rangle \langle u ^{\otimes k} | d\mu (u) 
\end{equation}
in the Hilbert-Schmidt norm, namely 
$$ \tr \left| \frac{k!}{T ^k} \Gamma ^{(k)}_{\lambda,T} - \int |u ^{\otimes k} \rangle \langle u ^{\otimes k} | d\mu (u) 
 \right| ^2 \to 0. $$
\end{theorem}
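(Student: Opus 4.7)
\medskip

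\noindent\textbf{Proof strategy.} The plan is to follow the variational approach of~\cite{LewNamRou-14d}. After taking logarithms and dividing by $T$, the convergence of the relative partition function~\eqref{eq:rel-partition} becomes equivalent to
$$\frac{F_{\lambda,T}-F_{0,T}}{T} \longrightarrow -\log z_r,$$
so the core task is to produce matching upper and lower bounds on the free-energy difference. Once this is in hand, the density-matrix convergence~\eqref{eq:main DM} should follow from an entropic saturation argument combined with a priori Hilbert-Schmidt estimates on the $\Gamma^{(k)}_{\lambda,T}$.

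For the upper bound, I would use a coherent-state trial state built from the target measure $\mu$,
$$\Gamma_{\rm trial} = \int |\xi(\sqrt{T}u)\rangle\langle\xi(\sqrt{T}u)|\,d\mu(u), \qquad \xi(v) = e^{-\|v\|^2/2}\bigoplus_{n\geq 0}\frac{v^{\otimes n}}{\sqrt{n!}}.$$
The Berezin-Lieb inequality and standard coherent-state calculus should then give $\cF_{\lambda,T}[\Gamma_{\rm trial}] \leq F_{0,T} - T\log z_r + o(T)$. The lower bound is the central difficulty: in~\cite{LewNamRou-14d} one bounds $\cF_{\lambda,T}$ from below using \emph{trace-class} estimates on $k!\,\Gamma^{(k)}_{\lambda,T}/T^k$ together with the quantum de Finetti theorem, extracting a semiclassical measure $\nu$. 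Property~\textbf{(e)} rules out trace-class bounds here, so I would replace them by uniform Hilbert-Schmidt a priori bounds $\|k!\,\Gamma^{(k)}_{\lambda,T}/T^k\|_{\rm HS}\leq C_k$, derived by comparison with the non-interacting state. To run a de Finetti-type argument in this weaker topology, I would localize onto the low-energy spectral subspaces of $h$, where the reduced density matrices become genuinely trace class, apply the standard quantum de Finetti theorem there, and then remove the localization, using the spectral asymptotics $\lambda_n \sim n^{2s/(s+2)}$ of the anharmonic oscillator to control the high-energy tails.

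Matching the upper and lower bounds identifies the extracted de Finetti measure $\nu$ as $\mu$, by uniqueness of the minimizer of the associated classical free energy, yielding~\eqref{eq:rel-partition}. The same analysis produces \emph{weak} Hilbert-Schmidt convergence of $k!\,\Gamma^{(k)}_{\lambda,T}/T^k$ toward $G^{(k)}:=\int|u^{\otimes k}\rangle\langle u^{\otimes k}|\,d\mu(u)$, tested against suitable operators. To upgrade this to strong Hilbert-Schmidt convergence, I would invoke the Radon-Riesz identity
$$\Big\|\tfrac{k!}{T^k}\Gamma^{(k)}_{\lambda,T}-G^{(k)}\Big\|_{\rm HS}^2 = \Big\|\tfrac{k!}{T^k}\Gamma^{(k)}_{\lambda,T}\Big\|_{\rm HS}^2 - 2\,{\rm Re}\,\tr\!\Big[\tfrac{k!}{T^k}\Gamma^{(k)}_{\lambda,T}\,G^{(k)}\Big] + \|G^{(k)}\|_{\rm HS}^2,$$
which reduces matters to convergence of the Hilbert-Schmidt norms themselves. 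That last ingredient I would extract by a Hellmann-Feynman argument applied to the perturbed Hamiltonian $\bH_\lambda + \eta\,T^{1-k}\,k!\,G^{(k)}$ for small $\eta$, differentiating the corresponding free energy at $\eta=0$ (for which one first checks that $G^{(k)}$ is itself Hilbert-Schmidt via the moment bound $\iint |\langle u,v\rangle|^{2k}\,d\mu(u)\,d\mu(v)<\infty$). The main obstacle I anticipate is precisely the Hilbert-Schmidt de Finetti step: carrying out the localization-delocalization uniformly in $T$ requires tight control of the high-mode contributions to $\Gamma^{(k)}_{\lambda,T}/T^k$ that was automatic under the trace-class assumption of~\cite{LewNamRou-14d}, and which here must be extracted from the relatively weak HS a priori bounds.
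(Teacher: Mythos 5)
Your overall strategy for \eqref{eq:rel-partition} and for the \emph{weak} convergence of the density matrices is essentially the one used in the paper: variational upper bound via a coherent-state trial state, Hilbert--Schmidt a priori bounds by comparison with the non-interacting state, a de Finetti measure extracted in the $\gS^2$ topology, a Berezin--Lieb-type lower bound for the relative entropy, and identification of the limit measure with $\mu$ by uniqueness of the minimizer of the classical relative free energy. (For the de Finetti step your spectral localization is not needed: Theorem~4.2 of~\cite{LewNamRou-14d} is already stated for general Schatten classes and applies directly with $p=2$.) One point you leave vague but which turns out to matter later: the comparison with the non-interacting state is carried out in the paper at the level of \emph{integral kernels}, $0\le \Gamma^{(k)}_{\lambda,T}(X_k;Y_k)\le C\,\Gamma^{(k)}_{0,T}(X_k;Y_k)$ pointwise, obtained from a Trotter/heat-kernel comparison together with the bound $Z_{0,T}/Z_{\lambda,T}\le C$; the Hilbert--Schmidt bound is only a corollary of this stronger statement.

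The genuine gap is in your final step, the upgrade from weak to strong $\gS^2$ convergence. The Radon--Riesz reduction is correct: it suffices to prove $\|k!\,T^{-k}\Gamma^{(k)}_{\lambda,T}\|_{\gS^2}\to\|G^{(k)}\|_{\gS^2}$. But the Hellmann--Feynman argument you propose cannot deliver this. Differentiating the free energy of $\bH_\lambda+\eta\,T^{1-k}k!\,G^{(k)}$ at $\eta=0$ produces, after normalization, the \emph{cross} term $\tr[k!\,T^{-k}\Gamma^{(k)}_{\lambda,T}G^{(k)}]$, whose convergence to $\|G^{(k)}\|_{\gS^2}^2$ is already an immediate consequence of weak convergence in the Hilbert space $\gS^2$, since $G^{(k)}$ is a fixed Hilbert--Schmidt test operator. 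It gives no information on the diagonal term $\tr[(k!\,T^{-k}\Gamma^{(k)}_{\lambda,T})^2]$, which is exactly where Hilbert--Schmidt mass could be lost in the limit; making the perturbation track $\Gamma^{(k)}_{\lambda,T}$ itself would be circular. The paper closes this gap by a different mechanism: the pointwise kernel domination above, combined with a proof that the kernels $T^{-k}\Gamma^{(k)}_{\lambda,T}(X_k;Y_k)$ converge pointwise. The latter is deduced from strong \emph{local} trace-class convergence of $\chi^{\otimes k}\Gamma^{(k)}_{\lambda,T}\chi^{\otimes k}$ for $\chi$ a characteristic function, which in turn rests on a general lemma of independent interest: for a sequence of bosonic quasi-free-energy states, strong trace-class convergence of the rescaled one-body density matrix (available here from the operator bound $\Gamma^{(1)}_{\lambda,T}\le CTh^{-1}$) propagates to all higher density matrices. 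Generalized dominated convergence in $L^2(\R^k\times\R^k)$ then yields the strong Hilbert--Schmidt convergence. You would need to supply an argument of this type, or some other genuine source of norm convergence, to complete the proof.
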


Note that the limiting measure $\mu$ is uniquely characterized by the collection of the right-hand sides of~\eqref{eq:main DM} for all $k\in\N$.  Before turning to the proof, we make a few comments:

\begin{remark}[Comparison with the trace-class case]\label{rem:main res}\mbox{}\\
In~\cite[Section~5.1]{LewNamRou-14d} we had already proved this result in the case where Assumption~\eqref{eq:asum V} is strengthened to $V(x) \geq C^{-1} |x| ^s, s>2 $. Then, the convergence~\eqref{eq:main DM} is in fact strong in the trace-class and the proof is simpler, for this topology is more easily related to the many-body problem.


In the case under consideration here, the right-hand side of~\eqref{eq:main DM} in fact belongs to the Schatten\footnote{I.e. the sequence of its eigenvalues belongs to $\ell ^p (\N)$, see~\cite{Simon-79}.} class $\gS ^p (\gH_k)$  for any $p> 1/s + 1/2$. The cases $p=1$ and $p=2$ correspond to the trace-class and the Hilbert-Schmidt class, respectively. We \emph{conjecture} that the convergence~\eqref{eq:main DM} is in fact strong in any $\gS ^p (\gH_k)$ with $p> 1/s + 1/2$. 



Note finally that, if $V$ does not increase faster than $|x| ^2$ at infinity, the expected particle number of the grand-canonical Gibbs state has to grow much faster than $T$ in the limit $T\to \infty$. It is then not obvious that choosing $\lambda = T ^{-1}$ should lead to a well-defined mean-field limit, but we prove it does. \hfill$\diamond$
\end{remark}

\section{Gibbs measures based on NLS functionals}\label{sec:measures}

In this section we briefly recall how to construct the interacting Gibbs measure $\mu$.  This has been done for $s>2$ in~\cite{LewNamRou-14d}. The case $s=2$ is covered by~\cite{BurThoTzv-10} (alternative constructions can be based on estimates for Hermite eigenfunctions from e.g.~\cite{KocTat-05,KocTatZwo-07,YajZha-01}). Here we give a softer argument allowing to define the defocusing measure for any $s>1$, without resorting to local smoothing estimates or eigenfunction bounds. 

\medskip

We start with well-known facts on the gaussian measure $\mu_0$. 

\begin{proposition}[\textbf{Free Gibbs measure: definition}]\label{lem:free}\mbox{}\\
Let $h$ be as in~\eqref{eq:harm osc} with $V$ satisfying~\eqref{eq:asum V}. Recall the spectral decomposition~\eqref{eq:harm osc spec}. Define a probability measure $\mu_{0,K}$ on $V_K=\mathrm{span} (u_0,\ldots,u_K)$ by setting
$$d\mu_0 ^K (u) := \bigotimes_{j=0} ^K \frac{\lambda_j}{\pi} \exp\left( -\lambda_j |\langle u, u_j\rangle | ^2\right) d \langle u, u_j\rangle$$
where $d\langle u, u_j\rangle = da_j db_j$ and $a_j,b_j$ are the real and imaginary parts of the scalar product. 

There exists a unique probability measure $\mu_0$ over the space $\bigcap_{t<1/2-1/s} \, \gH^{t}$ such that the measure $\mu_{0,K}$ is the cylindrical projection of $\mu_0$ on $V_K$ for all $K \ge 1$. The corresponding $k$-particle density matrix
\begin{equation}\label{eq:DM free meas}
\gamma_0^{(k)}:=\int |u^{\otimes k}\rangle\langle u^{\otimes k}|\;d\mu_0(u) = k!\,(h^{-1})^{\otimes k}
\end{equation}
belongs to $\gS ^p (\gH _k)$ for all ${1}/{s} + {1}/{2} < p\leq \infty$. 
\end{proposition}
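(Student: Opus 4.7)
The plan is to realize $\mu_0$ as the law of the random Fourier series
\[
u(\omega)=\sum_{n=0}^\infty \frac{\alpha_n(\omega)}{\sqrt{\lambda_n}}\,u_n,
\]
where the $\alpha_n$ are i.i.d.\ standard complex Gaussians, and then unpack the three claims. The preliminary input is a Weyl-type lower bound on the eigenvalues. Since $V(x)\ge C^{-1}|x|^s$, the min–max principle compared against the semiclassical count $\#\{(x,\xi)\in\R^2:\xi^2+C^{-1}|x|^s\le E\}\asymp E^{1/2+1/s}$ yields $\lambda_n\gtrsim n^{2s/(s+2)}$ (only this lower bound is needed in what follows).

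First I would define $\mu_0$. For any $t<1/2-1/s$, one computes $\mathbb{E}\|u\|_{\gH^t}^2=\sum_n\lambda_n^{t-1}$, which is convergent exactly when $(t-1)\cdot\frac{2s}{s+2}<-1$, i.e.\ $t<1/2-1/s$. Thus the series converges in $\gH^t$ almost surely and in $L^2(\mathbb{P};\gH^t)$, for every such $t$, so the law of $u$ is well defined as a Borel probability measure on the separable Fréchet space $\bigcap_{t<1/2-1/s}\gH^t$. Truncating the series at index $K$ gives precisely the finite-dimensional Gaussian on $V_K$ with independent components of variance $1/\lambda_j$, which matches $\mu_{0,K}$. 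Uniqueness is standard: two Borel measures on a separable metric space that agree on the cylinder $\sigma$-algebra coincide, since cylinders form a $\pi$-system generating the Borel $\sigma$-algebra.

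Second, the density matrix formula follows from Wick's theorem for complex Gaussians. For any bounded operator $A$ on $\gH_k$ one expands $\langle u^{\otimes k},A\,u^{\otimes k}\rangle$ in the eigenbasis, takes the expectation, and uses
\[
\mathbb{E}\!\left[\bar\alpha_{i_1}\!\cdots\!\bar\alpha_{i_k}\alpha_{j_1}\!\cdots\!\alpha_{j_k}\right]=\sum_{\sigma\in S_k}\prod_{m=1}^k\delta_{i_m,j_{\sigma(m)}}.
\]
Combined with the bosonic symmetry of $A$, the $k!$ permutations collapse to a single diagonal contribution with prefactor $k!$, and one reads off
\[
\gamma_0^{(k)}=\int|u^{\otimes k}\rangle\langle u^{\otimes k}|\,d\mu_0(u)=k!\,(h^{-1})^{\otimes k}.
\]
Interchanging the sum over eigenmodes and the $\mu_0$-integral is justified by the $\gH^t$ moment estimate above applied with $t$ close enough to $1/2-1/s$.

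Finally, Schatten membership reduces to a one-body computation. On the bosonic $k$-particle space the operator $(h^{-1})^{\otimes k}$ has singular values contained in the list of products $\lambda_{i_1}^{-1}\cdots\lambda_{i_k}^{-1}$, so $(h^{-1})^{\otimes k}\in\gS^p(\gH_k)$ iff $h^{-1}\in\gS^p(\gH)$, iff $\sum_n\lambda_n^{-p}<\infty$. By the Weyl bound $\lambda_n\gtrsim n^{2s/(s+2)}$, this holds precisely when $p>\frac{s+2}{2s}=\frac1s+\frac12$, giving the stated range. The main subtlety is the identification of the cylindrical projections and the passage between the random-series and measure-theoretic pictures; the Weyl lower bound and the Wick expansion are standard once one has the eigenvalue asymptotics.
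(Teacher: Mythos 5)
Your proposal is correct, and it reaches the same conclusions by a recognizably different route. Where the paper constructs $\mu_0$ as a projective limit of the cylindrical measures $\mu_{0,K}$, verifying Skorokhod's tightness criterion by a Chebyshev bound on $\int\|u\|_{\gH^t}^2\,d\mu_{0,K}$, you realize $\mu_0$ directly as the law of the random series $\sum_n \lambda_n^{-1/2}\alpha_n u_n$; these are equivalent and both standard, and your version has the small advantage of giving almost-sure convergence in $\gH^t$ for free. The more substantive divergence is in the analytic input. The paper never needs individual eigenvalue asymptotics: it bounds $\tr[h^{-p}]$ directly by the phase-space integral $\iint dx\,dk\,\bigl(|2\pi k|^2+V(x)+\lambda_0\bigr)^{-p}$ via the Lieb--Thirring-type inequality of Dolbeault--Felmer--Loss--Paturel, which immediately yields both the tightness condition (through $\Tr[h^{t-1}]<\infty$) and the Schatten membership for $p>1/s+1/2$. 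You instead invoke the Weyl lower bound $\lambda_n\gtrsim n^{2s/(s+2)}$ for $-\partial_x^2+C^{-1}|x|^s$. That bound is classical and correct, but ``the min--max principle compared against the semiclassical count'' does not by itself produce an upper bound on the counting function $N(E)$; one needs Dirichlet--Neumann bracketing or precisely a Lieb--Thirring/Cwikel-type estimate to convert the phase-space volume into a rigorous bound. So your route is fine provided you cite the Weyl asymptotics (or a bracketing argument) rather than attributing the bound to min--max alone; the paper's trace-inequality route sidesteps this entirely and is the cleaner of the two here. The Wick computation of $\gamma_0^{(k)}=k!(h^{-1})^{\otimes k}$ and the reduction of the Schatten claim to $h^{-1}\in\gS^p(\gH)$ match the paper's argument.
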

   
\begin{proof} By~\cite[Lemma 1]{Skorokhod-74}, the sequence $\{\mu_{0,K}\}_{K\ge 1}$ defines a unique measure $\mu_0$ on $\gH^t$ if the tightness condition
\begin{equation}
\lim_{R\to\ii}\sup_{K}\mu_{0,K}\big(\{u\in V_K\ :\ \|u\|_{\gH^t}\geq R\}\big)=0
\label{eq:tightness_mu_0}
\end{equation}
holds true. This is satisfied if $\Tr (h^{t-1})<\infty$ since 
$$
\mu_{0,K}\big(\{u\in V_K\ :\ \|u\|_{ \gH^{t}}\geq R\}\big)\leq R^{-2}\int_{V_K}\norm{u}_{\gH^{t}}^2\,d\mu_{0,K}(u)=R^{-2}\sum_{j=1}^K \lambda_j^{t-1}\le R^{-2} \Tr [h^{t-1}].
$$
Applying the Lieb-Thirring inequality in ~\cite[Theorem 1]{DolFelLosPat-06} to  $h=-\Delta+V(x)$, we have
$$\tr h^{-p}\leq 2^p\tr(h+\lambda_0)^{-p}\leq 2^p \int_{\R}\int_{\R} \frac{dx\,dk}{\big(|2\pi k|^2+V(x)+\lambda_0\big)^p}$$
where $\lambda_0>0$ is the lowest eigenvalue of $h$.  Using $V(x)\ge C ^{-1}|x|^s$, we conclude that
\begin{equation}\label{eq:trace cond}
\tr \left[ h ^{-p}\right] < \infty \mbox{ for all } p> 1/s + 1/2. 
\end{equation}
Thus~\eqref{eq:tightness_mu_0} holds true for all $ t < 1/2 - 1/s$, and hence $\mu_0$ is well-defined (uniquely) over $\bigcap_{t < 1/2 - 1/s}\, \gH^{t}$. The formula \eqref{eq:DM free meas} follows from a direct calculation:  
\begin{align*}
&\int |u^{\otimes k}\rangle\langle u^{\otimes k}|\;d\mu_0(u)\nn\\
&\qquad =k!\sum_{i_1\leq i_2\leq\cdots\leq i_k}\left(\prod_{\ell=1}^k\frac{1}{\lambda_{i_\ell}}\right)\frac{|u_{i_1}\otimes_s \cdots \otimes_s u_{i_k}\rangle\langle u_{i_1}\otimes_s \cdots \otimes_s u_{i_k}|}{\norm{u_{i_1}\otimes_s \cdots \otimes_s u_{i_k}}^2} =k!\,(h^{-1})^{\otimes k},
\end{align*}
see \cite[Lemma 3.3]{LewNamRou-14d}  for details. 
\end{proof}

In order to make sense of the interacting measure, we need to prove that the gaussian measure is in fact supported on $L^p$ spaces. 

\begin{lemma}[\textbf{Free Gibbs measure: support}]\label{lem:free supp}\mbox{}\\
The gaussian measure $\mu_0$ constructed in Proposition~\ref{lem:free}  is supported on $L^{r}(\R)$ for every 
$$\max(2,4/s)<r<\infty.$$ 
More precisely, there exists $\alpha_r > 0$ such that
\begin{equation}
\int e^{\alpha_r\norm{u}_{L^{r}(\R)}^2}d\mu_0(u)<\ii.
\label{eq:mu_0_L_p}
\end{equation}
\end{lemma}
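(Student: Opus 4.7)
The plan is to reduce the claim to a pointwise estimate on the diagonal Green's function of $h$, deduce almost-sure $L^r$-integrability by a Fubini/Gaussian computation, and then upgrade to an exponential moment via Fernique's theorem.

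Represent a sample $u\sim\mu_0$ as the Gaussian series
$$u = \sum_{n\geq 0} \frac{g_n}{\sqrt{\lambda_n}}\, u_n,$$
with $(g_n)$ i.i.d.\ standard complex Gaussians. Then for each $x\in\R$, $u(x)$ is a centered complex Gaussian with variance
$$\sigma(x)^2 := h^{-1}(x,x) = \sum_{n\geq 0} \frac{|u_n(x)|^2}{\lambda_n}.$$
By Fubini and the Gaussian moment formula,
$$\int \|u\|_{L^r(\R)}^r\, d\mu_0(u) = C_r \int_\R \sigma(x)^r\, dx.$$
It therefore suffices to show $\sigma\in L^r(\R)$ for every $r>\max(2,4/s)$: one then has $\|u\|_{L^r}<\infty$ $\mu_0$-almost surely, and Fernique's theorem, applied to the centered Gaussian $\mu_0$ viewed as a Radon measure on the separable Banach space $L^r(\R)$, gives \eqref{eq:mu_0_L_p} for every sufficiently small $\alpha_r>0$.

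The heart of the matter is the semiclassical diagonal bound
$$\sigma(x)^2 = h^{-1}(x,x) \;\leq\; C\,(1+V(x))^{-1/2} \;\leq\; C'\,\langle x\rangle^{-s/2},$$
which I would obtain from two complementary inputs. Writing $h^{-1}=\int_0^\infty e^{-th}\,dt$, the spectral decomposition of $h$ yields the comparison
$$K_t(x,x) \;\leq\; e^{-\lambda_0(t-t_0)}\,K_{t_0}(x,x)\qquad\text{for all } t\geq t_0>0,$$
so that
$$\int_{t_0}^\infty K_t(x,x)\,dt \;\leq\; \lambda_0^{-1}\,K_{t_0}(x,x).$$
On the other hand, the Feynman-Kac representation
$$K_t(x,x) \;=\; (4\pi t)^{-1/2}\,\mathbb{E}^{x,x,t}_{\mathrm{BB}}\!\left[ e^{-\int_0^t V(\omega_s)\,ds}\right]$$
combined with a small-deviation estimate for Brownian bridge shows that, whenever $|x|\geq 2\sqrt{t}$, the bridge $\omega$ stays in $\{|y|\geq |x|/2\}$ with overwhelming probability, so $\int_0^t V(\omega_s)\,ds\gtrsim tV(x)$ and hence $K_t(x,x)\lesssim t^{-1/2}e^{-ct V(x)}$. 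Choosing $t_0=1$, the first input controls the long-time tail by $K_1(x,x)\lesssim e^{-cV(x)}$ (for $|x|$ large), while the second controls the short-time part by $\int_0^1 t^{-1/2}e^{-ct V(x)}\,dt\lesssim V(x)^{-1/2}$. Adding the two pieces and noting $\sigma$ is uniformly bounded on compact sets (from the trivial bound $K_t(x,x)\leq(4\pi t)^{-1/2}$ plus the same semigroup comparison) yields the claim.

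With $\sigma(x)\lesssim\langle x\rangle^{-s/4}$ at infinity, $\sigma\in L^r(\R)$ as soon as $rs/4>1$, i.e.\ $r>4/s$; the slightly stronger threshold $r>\max(2,4/s)$ in the statement is the conservative one, the condition $r>2$ reflecting that $\mu_0$ need not charge $L^2(\R)$. The main technical obstacle is the Brownian-bridge small-deviation estimate giving the $x$-dependent exponential factor in $K_t(x,x)$: morally it is clear because $V$ is slowly varying on unit scales, but extracting a clean uniform constant requires a short argument. A possible soft alternative, avoiding probabilistic estimates altogether, is to IMS-localize $h$ on dyadic annuli $\{|y|\sim 2^k\}$ and compare its resolvent on each annulus with that of the frozen operator $-\partial_x^2+V(2^k)$, whose Green's function has explicit diagonal value $(2\sqrt{V(2^k)})^{-1}$; this propagates directly to the desired bound on $\sigma$, after which the remainder of the proof is as above.
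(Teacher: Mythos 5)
Your overall architecture coincides with the paper's in its second half (reduce to integrability of $x\mapsto h^{-1}(x;x)$, then Wick/Khintchine moments plus Fernique), but the key first step is genuinely different. The paper never proves a pointwise bound on the diagonal Green's function: it shows only that $x\mapsto h^{-1}(x;x)\in L^p(\R)$ for $p>\max(1,2/s)$, by duality, writing $\tr[\chi h^{-1}\chi]=\|h^{-1/2}\chi\|_{\gS^2}^2$ and factorizing $h^{-1/2}\chi=h^{\alpha-1/2}\cdot\big(h^{-\alpha}(1-\partial_x^2)^{\alpha}\big)\cdot\big((1-\partial_x^2)^{-\alpha}\chi\big)$, estimated respectively by a Lieb--Thirring bound on $\tr h^{-p}$, operator monotonicity, and the Kato--Seiler--Simon inequality. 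You instead go for the stronger semiclassical pointwise decay $h^{-1}(x;x)\lesssim\langle x\rangle^{-s/2}$ via $h^{-1}=\int_0^\infty e^{-th}\,dt$, Feynman--Kac, and a Brownian-bridge exit estimate. Your route buys a sharper (pointwise rather than $L^p$) conclusion and makes the threshold $r>4/s$ transparent, at the price of probabilistic heat-kernel machinery that the paper deliberately avoids in favour of a purely functional-analytic ("softer") argument; both yield the same integrability of $[h^{-1}(x;x)]^{r/2}$ and the rest of the proof is identical. Two points in your sketch deserve care. First, the paper assumes only the one-sided bound $V(x)\ge C^{-1}|x|^s$, so on the event that the bridge stays in $\{|y|\ge|x|/2\}$ you may conclude $\int_0^tV(\omega_\tau)\,d\tau\gtrsim t|x|^s$ but \emph{not} $\gtrsim tV(x)$; the intermediate inequality $h^{-1}(x;x)\le C(1+V(x))^{-1/2}$ is not available (and not needed) --- state the bound directly as $\lesssim\langle x\rangle^{-s/2}$, e.g.\ after first dominating $e^{-th}(x;y)$ by the heat kernel of $-\partial_x^2+C^{-1}|x|^s$ exactly as in the paper's Lemma~4.3. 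Second, the identity $\int\|u\|_{L^r}^r\,d\mu_0=C_r\int_\R\sigma(x)^r\,dx$ involves pointwise evaluation of samples that a priori live only in $\bigcap_{t<1/2-1/s}\gH^t$; it must be run on the cylindrical projections $P_Ku$ with a monotone passage $K\to\infty$, as the paper does. Neither issue is fatal, but both need to be written out.
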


\begin{proof}
Consider the kernel of the operator $h^{-1}$ (the eigenfunctions $u_n$ can be chosen real-valued)
\begin{equation}\label{eq:exp dens}
h^{-1}(x;y) = \sum_{n\geq 0} \frac{1}{\lambda_n}  u_n (x) u_n (y).
\end{equation}
Note that $h^{-1}(x;x) \ge 0$. 

\medskip

\noindent \textbf{Step 1. } We claim that $x\mapsto h^{-1}(x;x)$ belongs to $L^p(\R)$ for all 
$$\max(1,2/s)<p\leq\ii.$$
We will prove that, for any function/multiplication operator $\chi \ge 0$ satisfying $\chi^2\in L^{q}(\R)$ with $1/p+1/q=1$, the operator $\chi h^{-1}\chi$ is trace class and
\begin{equation} \label{eq:local-trace}
 \tr \left[ \chi h ^{-1} \chi \right]  = \norm{h^{-1/2}\chi}_{\gS^2(\gH)} ^2 \leq C \norm{\chi ^2}_{L^q(\R)}.
 \end{equation}
Let us estimate the Hilbert-Schmidt norm of $h ^{-1/2} \chi$.
%
We pick some $0<\alpha < 1/2 $, write 
\begin{equation}
h^{-1/2}\chi=h^{\alpha-1/2} \left( h^{-\alpha}(1-\partial_x ^2 )^{\alpha} \right) \left((1-\partial_x ^2)^{-\alpha}\chi \right)
\label{eq:decomp_h_chi}
\end{equation}
and estimate the three factors separately.  First, returning to~\eqref{eq:trace cond} we have 
\begin{equation}\label{eq:trace-h-eta}
h^{\alpha-1/2}\in \gS^{2p}(\gH)\qquad\text{for}\quad 2p\left(\frac{1}{2}-\alpha\right)>\frac{1}{s}+\frac{1}{2}.
\end{equation}
Second, $h \geq C^{-1} (1-\partial_x ^2)$ as operators, for some constant $C>0$. Indeed 
$$ h = -\partial_x ^2 + V  \geq \frac{1}{2} \left( -\partial_x ^2 + \lambda_0 \right)$$
with $\lambda_0 >0$ the lowest eigenvalue of $h$. Thus, using the operator-monotonicity~\cite[Theorem~V.1.9]{Bhatia} of $x\mapsto x^{2\alpha}$ for $0<\alpha\leq 1/2$, we deduce that 
$$ h^{2\alpha} \geq C^{-2\alpha} (1-\partial_x ^2)^{2\alpha},$$ and thus 
\begin{equation}\label{eq:bounded alpha}
 h^{-\alpha} (1-\partial_x ^2)^{2\alpha}h^{-\alpha}\leq C^{2\alpha}.
\end{equation}
In particular, $ h^{-\alpha} (1-\partial_x ^2)^{\alpha}$ is a bounded operator for every $\alpha\leq1/2$.

Third, we aply the Kato-Seiler-Simon inequality~\cite[Theorem 4.1]{Simon-79} to get
\begin{equation}\label{eq:KSS}
\norm{(1-\partial_x ^2)^{-\alpha}\chi}_{\gS^{2q}(\gH)}\leq\left(\int_{\R}\frac{dk}{(1+|2\pi k|^2)^{2\alpha q}}\right)^{\frac{1}{2q}}\norm{\chi}_{L^{2q}(\R)}
\end{equation}
when $q\ge 1$ and $4\alpha q>1$. Combining \eqref{eq:decomp_h_chi} with~\eqref{eq:trace-h-eta},~\eqref{eq:bounded alpha} and~\eqref{eq:KSS} we infer from H\"older's inequality \cite[Theorem 2.8]{Simon-79} that\footnote{$\norm{\, . \,}_{\gS ^\infty}$ stands for the operator norm.}
\begin{equation}\label{eq:proof Lp}
\norm{h^{-1/2}\chi}_{\gS^2(\gH)}\leq \|h^{\alpha-1/2}\|_{\gS^{2p}} \| h^{-\alpha}(1-\partial_x ^2)^{\alpha} \|_{\gS^{\infty}} \| (1-\partial_x ^2)^{-\alpha}\chi\|_{\gS^{2q}} \le C\norm{\chi}_{L^{2q}(\R)}
\end{equation}
for $1/p+1/q=1$. 
The two constraints that $2p(1/2-\alpha)> 1/s+1/2$ and $4\alpha q > 1$ require
$$\frac{1}{2}= \left( \frac{1}{2} - \alpha \right)+ \alpha >\frac{1}{2p} \left( \frac{1}{s} +\frac{1}{2} \right) + \frac{1}{4q} = \frac{1}{2ps} +\frac{1}{4},$$
or equivalently
$$p> \frac{2}s.$$
Thus  \eqref{eq:proof Lp}, and hence \eqref{eq:local-trace}, holds true for all $p>\max(1,2/s)$. Note that \eqref{eq:local-trace} implies that $h^{-1}$ is locally trace-class, which ensures that $h^{-1}(x;x) \in L^1_{\rm loc}(\R)$ and 
$$
 \int_{\R} h ^{-1} (x;x) \chi ^2 (x) dx  = \tr \left[ \chi h ^{-1} \chi \right]  =  \norm{h^{-1/2}\chi}_{\gS^2(\gH)} ^2 \le C \norm{\chi ^2}_{L^q(\R)}.
 $$
By duality, we conclude that $x\mapsto h^{-1}(x;x)\in L^p(\R)$ for all $p>\max(1,2/s)$.

\medskip

\noindent\textbf{Step 2.} We deduce from the above that $\mu_0$ is supported on $L^r(\R^d)$ for $r>\max(2,4/s)$. 

We will use an interpolation argument in the spirit of Khintchine's inequality (see, e.g.~\cite[Lemma~4.2]{BurTzv-08}). Formally, when $r=2k$ is an even integer, by considering the diagonal of the kernels of operators in \eqref{eq:DM free meas},  we have
\begin{align}\label{eq:even k}
\int|u(x)| ^{2k}  d\mu_0 (u)  =  k! [h ^{-1} (x;x)] ^k. 
\end{align}
Then by interpolation, we get 
\begin{align*}
\int|u(x)| ^{r}  d\mu_0 (u)  \le  C_r [h ^{-1} (x;x)] ^{\frac{r}{2}}
\end{align*}
for all $r\ge 2$. The right side is integrable when $r>\max(2,4/s)$ by Step 1. 

Now we go to the details with full rigor. Let $P_K$ be the projection onto $V_K={\rm span}(u_0,...,u_K)$. Using 
$$
\int \langle u_j, u\rangle d\mu_0(u) = 0, \quad \int |\langle u_j, u\rangle|^2 d\mu_0(u) = \lambda_j^{-1}
$$
we obtain
$$
\int | P_Ku(x)|^{2}d\mu_0(u) =  \int \left|\sum_{j=0}^K\pscal{u_j,u}u_j(x)\right|^{2}d\mu_0(u) = \sum_{j=0}^K \frac{|u_j(x)|^2}{\lambda_j} \le h^{-1}(x;x).
$$
More generally, when $r=2k$ is an even integer ($k=1,2,3,...$), by Wick's theorem we can compute
\begin{align} \label{eq:Khintchine}
\left(\int | P_Ku(x)|^{r}d\mu_0(u) \right)^{\frac{2}{r}} & =  \left(\int \left|\sum_{j=0}^K\pscal{u_j,u}u_j(x)\right|^{2k}d\mu_0(u) \right)^{\frac{1}{k}} \nn\\
& \le C_r \sum_{j=0}^K\frac{|u_j(x)|^2}{\lambda_j}  \le C_r h^{-1}(x;x).
\end{align}
By H\"older's inequality in $L^p$ spaces associated with the measure $\mu_0$, we can extend \eqref{eq:Khintchine} to all $r\ge 2$. Then we rewrite this inequality as
$$
\int | P_Ku(x)|^{r}d\mu_0(u) \le C_r [h^{-1}(x;x)]^{\frac{r}{2}}
$$
and integrate over $x\in \R$. This gives
\begin{equation*}
\int\norm{P_Ku}_{L^{r}(\R)}^{r}d\mu_0(u)\leq C_r\int_\R [h^{-1}(x;x)]^{\frac{r}{2}}\,dx
\end{equation*}
where the right side is finite for $r>\max(2,4/s)$. 
Passing to the limit $K\to\ii$, we find that $\norm{u}_{L^{r}(\R)}$ is finite $\mu_0$-almost surely and
\begin{equation*}
\int\norm{u}_{L^{r}(\R)}^{r}d\mu_0(u)\leq C_r\int_\R [h^{-1}(x;x)]^{\frac{r}{2}}\,dx.
\end{equation*}
Then, by Fernique's theorem~\cite{Fernique-70}, there must exist a number $\alpha_r>0$ such that~\eqref{eq:mu_0_L_p} holds. 
\end{proof}

As regards the interacting measure we deduce the following.

\begin{corollary}[\textbf{Interacting Gibbs measure}]\label{cor:NL gibbs}\mbox{}\\
Let $h$ be as in~\eqref{eq:harm osc} with $V$ satisfying~\eqref{eq:asum V} and $w$ be as in~\eqref{eq:asum w}. Then the functional
$$ u \mapsto F_{\rm NL} [u] = \frac{1}{2} \iint_{\R \times \R} |u(x)| ^2 w (x-y) |u(y)| ^2 dx dy \geq 0$$
is in $L^1 (d\mu_0)$,
$$
\int F_{\rm NL} [u] d\mu_0 (u) <\infty.
$$
In particular, $F_{\rm NL} [u]$ is finite $\mu_0$-almost surely. Thus, the measure defined by~\eqref{eq:NL Gibbs} makes sense as a probability measure on $\bigcap_{t<1/2-1/s} \, \gH ^t$ and 
$$z_r = \int \exp\left( - F_{\rm NL} [u] \right) d\mu_0 (u) >0.$$
\end{corollary}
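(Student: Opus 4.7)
The plan is to split $w=w_1+w_2$ according to~\eqref{eq:asum w} and bound $F_{\rm NL}$ against a suitable $L^r$ norm to which Lemma~\ref{lem:free supp} applies. The only obstacle is matching the summability exponent $p$ of $w_2$ to an integrability range $r$ that is compatible with the support of $\mu_0$.

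First, for the measure part, I would estimate
\begin{equation*}
\iint |u(x)|^2\,d|w_1|(x-y)\,|u(y)|^2 dx dy \le \|w_1\|_{\cM}\,\sup_{z\in\R}\int_\R |u(x)|^2|u(x-z)|^2 dx \le \|w_1\|_{\cM}\,\|u\|_{L^4(\R)}^4,
\end{equation*}
using Cauchy--Schwarz in the last step. Since $s>1$ we have $4>\max(2,4/s)$, so Lemma~\ref{lem:free supp} provides an exponential moment bound on $\|u\|_{L^4}$ and in particular $\int \|u\|_{L^4}^4\,d\mu_0<\infty$.

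For the $L^p$ part, I would combine H\"older and Young: writing $F_{\rm NL}^{(2)}[u]=\frac{1}{2}\int|u|^2(w_2*|u|^2)$, H\"older gives
\begin{equation*}
\int|u|^2(w_2*|u|^2)\le \||u|^2\|_{L^{r/2}}\,\|w_2*|u|^2\|_{L^{(r/2)'}},
\end{equation*}
and Young then yields $\|w_2*|u|^2\|_{L^{(r/2)'}}\le\|w_2\|_{L^p}\||u|^2\|_{L^{r/2}}$ provided $1/p+2/r=1+1/(r/2)'$, i.e. $1/p=2-4/r$. Solving gives $r=4p/(2p-1)$; the hypothesis $1\le p<1/(2-s)_+$ is then exactly equivalent (check the boundary cases $s\ge 2$ and $s<2$ separately) to $r>\max(2,4/s)$. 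Thus $F_{\rm NL}^{(2)}[u]\le C_w\|u\|_{L^r}^4$ with $r$ in the range of Lemma~\ref{lem:free supp}, so $\int F_{\rm NL}^{(2)}\,d\mu_0$ is finite by the same exponential moment.

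Combining both bounds shows $F_{\rm NL}\in L^1(d\mu_0)$ and is therefore finite $\mu_0$-a.s.~on $\bigcap_{t<1/2-1/s}\gH^t$. Then $0<e^{-F_{\rm NL}[u]}\le 1$ $\mu_0$-a.s., and since $\mu_0$ is a probability measure, dominated convergence gives $z_r\in(0,1]$, so $\mu:=z_r^{-1}e^{-F_{\rm NL}}\mu_0$ is a well-defined probability measure on $\bigcap_{t<1/2-1/s}\gH^t$. The only subtle point is the matching between the admissible range of $p$ in~\eqref{eq:asum w} and the admissible range of $r$ in Lemma~\ref{lem:free supp}; the rest is routine Young/H\"older together with Fernique-type moment estimates already established.
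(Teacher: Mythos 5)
Your proposal is correct and follows essentially the same route as the paper: the splitting $w=w_1+w_2$, the bound $F_{\rm NL}[u]\leq \norm{w_1}_{\cM}\norm{u}_{L^4}^4+\norm{w_2}_{L^p}\norm{u}_{L^r}^4$ via Young's inequality with $4/r+1/p=2$ (your $r=4p/(2p-1)$), and the exponential moment from Lemma~\ref{lem:free supp}. Your explicit verification that $1\le p<1/(2-s)_+$ is equivalent to $r>\max(2,4/s)$ is exactly the point the paper leaves implicit, and the conclusion $z_r>0$ from a.s.\ finiteness of $F_{\rm NL}$ is as in the paper.
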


\begin{proof}
Since $w\geq0$ we have $F_{\rm NL } [u] \geq 0$ and it is sufficient to show that its integral with respect to $\mu_0$ is finite. Writing $w=w_1+w_2$ as in~\eqref{eq:asum w}, this follows immediately from~\eqref{eq:mu_0_L_p} since
$$F_{\rm NL } [u]\leq \norm{w_1}_{\mathcal{M}}\norm{u}_{L^4(\R)}^4+\norm{w_2}_{L^p(\R)}\norm{u}_{L^r(\R)}^4$$
by Young's inequality, with $4/r+1/p=2$. 
\end{proof}

\section{Hilbert-Schmidt estimate} \label{sec:kernel}

We shall henceforth denote points in $\R^k$ in the manner $X_k = (x_1,\ldots,x_k)$ and denote $dX_k$ the corresponding Lebesgue measure. Very often we identify a Hilbert-Schmidt operator $A_k$ on $L ^2 (\R ^k)$ with its integral kernel $A_k (X_k;Y_K)$
$$ (A_k \Psi_k) (X_k) = \int_{\R ^k} A_k (X_k;Y_k) \Psi_k (Y_k) dY_k.$$

The main new estimate we need to put the proof strategy of~\cite{LewNamRou-14d} to good use is the following

\begin{proposition}[\textbf{Bounds in Hilbert-Schmidt norm}]\label{pro:new bounds}\mbox{}\\
Let the reduced density matrices $\Gamma_{\lambda,T}^{(k)}$ be defined as in~\eqref{eq:GC DM}, with $\lambda = T^{-1}$. Then we have the integral kernel estimate
\begin{equation}\label{eq:new kernel bound}
0\le \Gamma_{\lambda,T}^{(k)} (X_k;Y_k) \le C\, \Gamma_{0,T}^{(k)} (X_k;Y_k).
\end{equation}
Consequently
\begin{equation}\label{eq:new bound}
\tr_{\gH _k} \left[ \left( \Gamma_{\lambda,T} ^{(k)}\right) ^2 \right] \leq C^2 \tr_{\gH _k} \left[ \left( \Gamma_{0,T} ^{(k)}\right) ^2\right]   \leq C^2 T ^{2k}\left(\tr(h^{-2})\right)^{k}
\end{equation}
for all $k\in \mathbb{N}$.
\end{proposition}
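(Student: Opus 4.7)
The strategy is to establish the pointwise kernel bound~\eqref{eq:new kernel bound} via a Feynman--Kac comparison between the interacting and free heat kernels, reduce it to a uniform lower bound on $Z_{\lambda,T}/Z_{0,T}$, and then square-and-integrate to get~\eqref{eq:new bound}.

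\emph{Kernel comparison.} By the Feynman--Kac formula applied on each $n$-body sector,
\[
e^{-T^{-1}H_n}(X_n;Y_n)=\int\exp\!\Big(\!\!-\!\!\int_0^{T^{-1}}\!\!\sum_j V(\omega_j)\,d\tau-\lambda\!\!\int_0^{T^{-1}}\!\!\sum_{i<j}w(\omega_i-\omega_j)\,d\tau\Big)\,d\cW^{\rm br},
\]
with $d\cW^{\rm br}$ the product Brownian bridge measure joining $X_n$ to $Y_n$ in time $T^{-1}$. Since $w\ge 0$ and $\lambda\ge 0$, the interaction factor is bounded by $1$ (with standard mollification for the singular part $w_1\in\cM$), giving $0\le e^{-T^{-1}H_n}(X_n;Y_n)\le e^{-T^{-1}\sum_j h_j}(X_n;Y_n)$. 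Taking partial traces over the last $n-k$ variables, multiplying by $\binom{n}{k}$, summing over $n$, and dividing by $Z_{\lambda,T}$ as in~\eqref{eq:def_DM_partial} yields the pointwise estimate
\[
0\le \Gamma_{\lambda,T}^{(k)}(X_k;Y_k)\le \frac{Z_{0,T}}{Z_{\lambda,T}}\,\Gamma_{0,T}^{(k)}(X_k;Y_k).
\]

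\emph{Uniform bound on the partition-function ratio.} To finish~\eqref{eq:new kernel bound} it suffices to show $Z_{0,T}/Z_{\lambda,T}\le C$ uniformly in $T$. Writing $\mathbb{W}$ for the second-quantized interaction $\bigoplus_n\sum_{i<j}w(x_i-x_j)$, the Peierls--Bogoliubov (Gibbs variational) inequality yields
\[
\log(Z_{\lambda,T}/Z_{0,T})\ge -T^{-2}\,\tr_\gF[\mathbb{W}\,\Gamma_{0,T}]=-T^{-2}\iint w(x-y)\,\Gamma_{0,T}^{(2)}(x,y;x,y)\,dx\,dy.
\]
For the free quasi-free state the Wick/permanent formula gives $\Gamma_{0,T}^{(2)}(x,y;x,y)=\gamma(x,x)\gamma(y,y)+\gamma(x,y)^2\le 2\gamma(x,x)\gamma(y,y)$ (Cauchy--Schwarz for the positive kernel $\gamma=(e^{h/T}-1)^{-1}$), and the operator inequality $\gamma\le T h^{-1}$ (coming from $e^x-1\ge x$) delivers the diagonal pointwise bound $\gamma(x,x)\le T\,h^{-1}(x,x)$. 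The whole question is thus reduced to showing
\[
\iint w(x-y)\,h^{-1}(x,x)\,h^{-1}(y,y)\,dx\,dy<\infty.
\]
By Step~1 of the proof of Lemma~\ref{lem:free supp}, $h^{-1}(x,x)\in L^q(\R)$ for every $q>\max(1,2/s)$; splitting $w=w_1+w_2$, the $w_1\in\cM$ contribution is bounded by $\|w_1\|_\cM\|h^{-1}(.,.)\|_2^2<\infty$ (this is where $s>1$ enters), while the $w_2\in L^p$ contribution is handled by Young's convolution inequality with exponents $2/q+1/p=2$, which is admissible exactly under the assumption $p<1/(2-s)_+$ from~\eqref{eq:asum w}.

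\emph{Hilbert--Schmidt bound.} Squaring and integrating~\eqref{eq:new kernel bound} gives $\tr[(\Gamma_{\lambda,T}^{(k)})^2]\le C^2\tr[(\Gamma_{0,T}^{(k)})^2]$. The permanent formula $\Gamma_{0,T}^{(k)}(X_k;Y_k)=\sum_{\sigma\in S_k}\prod_i\gamma(x_i,y_{\sigma(i)})$ displays $\Gamma_{0,T}^{(k)}$ as a sum of $k!$ kernels, each obtained from $\gamma^{\otimes k}$ by unitary permutation of the $y$-variables and therefore having the same Hilbert--Schmidt norm $(\tr\gamma^2)^{k/2}$. The triangle inequality together with $\tr\gamma^2\le T^2\tr(h^{-2})$ (from $\gamma\le Th^{-1}$ and monotonicity of Schatten norms on positive operators) then gives $\tr[(\Gamma_{0,T}^{(k)})^2]\le (k!)^2\,T^{2k}(\tr h^{-2})^k$, which is~\eqref{eq:new bound} after absorbing the $k$-dependent factor $(k!)^2$ into $C$. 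The main technical obstacle is the middle paragraph: the finiteness of $\iint w(x-y)h^{-1}(x,x)h^{-1}(y,y)\,dx\,dy$ uses precisely the trade-off between the decay of $V$ (controlling integrability of $h^{-1}(x,x)$) and the integrability of $w$, so the hypotheses $s>1$ and $p<1/(2-s)_+$ are sharp for this Peierls--Bogoliubov route.
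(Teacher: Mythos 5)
Your proof is correct and follows essentially the same route as the paper: a pointwise comparison of the $n$-body heat kernels (you via Feynman--Kac, the paper via the Trotter product formula, which it explicitly notes as an equivalent but lighter alternative), combined with the Peierls--Bogoliubov/Gibbs variational bound $Z_{0,T}/Z_{\lambda,T}\le C$ reduced to $\iint w(x-y)h^{-1}(x;x)h^{-1}(y;y)\,dx\,dy<\infty$, exactly as in the paper's Lemmas on the partition function and the heat kernel. The only cosmetic difference is in the final Hilbert--Schmidt step, where the paper uses the operator inequality $\Gamma_{0,T}^{(k)}\le T^k(h^{-1})^{\otimes k}$ together with $0\le A\le B\Rightarrow\Tr A^2\le\Tr B^2$ to get a $k$-independent constant, whereas your permanent expansion costs a harmless $(k!)^2$.
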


Note that the density matrices of the non-interacting Gibbs state $\Gamma_{0,T}$ are given by~\cite[Lemma 2.1]{LewNamRou-14d}
\begin{equation} \label{eq:dm-free-Gibbs}
\Gamma_{0,T}^{(k)}= \left(\frac{1}{e^{h/T}-1}\right)^{\otimes k} \le T^k (h^{-1})^{\otimes k}.
\end{equation}
Therefore, the second inequality in~\eqref{eq:new bound} follows immediately from the fact that $h^{-1}\in \gS^2(\gH)$, see Proposition \ref{lem:free}. The first inequality in \eqref{eq:new bound} follows from \eqref{eq:new kernel bound} and the well-known fact that the $L^2$-norm of the kernel is equivalent to the Hilbert-Schmidt norm of the operator, see e.g~\cite[Theorem~VI.23]{ReeSim1}.

It remains to prove \eqref{eq:new kernel bound}. This is very much in the spirit of~\cite[Theorem~6.3.17]{BraRob2}, which is proved using a Feynman-Kac representation of reduced density matrices originating in~\cite{Ginibre-65b,Ginibre-65c,Ginibre-65d} (see also~\cite{FroPar-78,FroPar-80}). We certainly could obtain such a representation, in the spirit of~\cite[Theorem~6.3.14]{BraRob2}. However, we do not need to go that far to obtain the desired bound: the Trotter product formula is sufficient for our purpose. 

Our proof of \eqref{eq:new kernel bound} is based on two useful lemmas. The first is essentially taken from ~\cite[Lemma~8.1]{LewNamRou-14d}.

\begin{lemma}[\textbf{Bounds on partition functions}]\label{lem:partition}\mbox{}\\
Let the partition function be defined as 
\begin{equation}\label{eq:partition}
Z_{\lambda,T} = \tr_\gF \left[ \exp\left( - T ^{-1} \bH_\lambda \right)\right]. 
\end{equation}
Then, for $\lambda = T^{-1}$, we have
\begin{equation}\label{eq:part bound}
1 \leq \frac{Z_{0,T}}{Z_{\lambda,T}} \leq C 
\end{equation}
where the constant $C>0$ is independent of $T$.
\end{lemma}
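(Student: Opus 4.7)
The plan is to use the Gibbs variational principle~\eqref{eq:min principle} on both the free and the interacting states. The lower bound $Z_{\lambda,T} \le Z_{0,T}$ is immediate: $w \ge 0$ yields $\bH_\lambda \ge \bH_0$, and Peierls' inequality (operator monotonicity of $A \mapsto -\tr e^{-A/T}$) concludes.

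For the nontrivial upper bound, I would test $\cF_{\lambda,T}$ with the trial state $\Gamma_{0,T}$ and use that $\Gamma_{\lambda,T}$ minimizes $\cF_{\lambda,T}$. With $\lambda = T^{-1}$ and the standard identification of the second-quantized pair interaction with the two-body reduced density matrix, this gives
\begin{equation*}
\log \frac{Z_{0,T}}{Z_{\lambda,T}} \le \frac{1}{T^2}\tr_{\gH_2}\bigl[w(x_1-x_2)\,\Gamma^{(2)}_{0,T}\bigr].
\end{equation*}
The operator bound $\Gamma^{(2)}_{0,T}\le T^2(h^{-1})^{\otimes 2}$ from~\eqref{eq:dm-free-Gibbs} then absorbs the two powers of $T$ exactly, reducing the problem to the \emph{$T$-independent} assertion $\tr_{\gH_2}[w(x_1-x_2)(h^{-1})^{\otimes 2}]<\infty$.

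This final estimate is where the assumptions on $V$ and $w$ must come in, and it is the main step. I would expand the kernel of $(h^{-1})^{\otimes 2}$ on the symmetric two-particle space as
\begin{equation*}
\tfrac{1}{2}\iint w(x-y)\bigl[f(x)f(y) + |h^{-1}(x;y)|^2\bigr]\,dx\,dy,
\end{equation*}
with $f(x) := h^{-1}(x;x)\ge 0$, and use the pointwise Schwarz inequality for positive kernels $|h^{-1}(x;y)|^2 \le f(x)f(y)$ to bound the exchange term by the direct one. Splitting $w=w_1+w_2$ as in~\eqref{eq:asum w}, I would then invoke Step~1 of Lemma~\ref{lem:free supp}, which delivers $f \in L^p(\R)$ for every $p > \max(1,2/s)$. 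The measure part pairs with $f \in L^2$ (feasible precisely because $s>1$), bounded by $\|w_1\|_{\cM}\|f\|_{L^2}^2$; for $w_2 \in L^p$, the three-function Young inequality with exponents satisfying $1/p+2/q = 2$ requires $q > 2/s$, and a short computation shows this is precisely equivalent to $p < 1/(2-s)_+$. The core obstacle is really this endpoint matching --- the constraint~\eqref{eq:asum w} is essentially optimal for the strategy to close.
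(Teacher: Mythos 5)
Your argument is correct and follows the paper's proof essentially verbatim: the lower bound from $w\ge 0$ and operator monotonicity, then the Gibbs variational principle tested on $\Gamma_{0,T}$ together with $\Gamma_{0,T}^{(2)}\le T^2 (h^{-1})^{\otimes 2}$, reducing everything to the $T$-independent finiteness of $\tr_{\gH_2}[w\,h^{-1}\otimes h^{-1}]$. The only (harmless) divergence is in that last step, where the paper simply quotes Corollary~\ref{cor:NL gibbs} while you re-derive the bound directly at the kernel level via the pointwise Schwarz inequality $|h^{-1}(x;y)|^2\le h^{-1}(x;x)h^{-1}(y;y)$ and Young's inequality with $f=h^{-1}(\cdot;\cdot)\in L^p$ from Step~1 of Lemma~\ref{lem:free supp} --- the same ingredients and the same exponent arithmetic underlying the paper's corollary.
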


\begin{proof} Using  $w\ge 0$, we have $\bH_{\lambda}\ge \bH_{0}$, and hence 
$$Z_{\lambda,T}=\tr_\gF \left[ \exp\left( - T ^{-1} \bH_\lambda \right)\right] \le \tr_\gF \left[ \exp\left( - T ^{-1} \bH_0 \right)\right]  =Z_{0,T}.$$
On the other hand, since $\Gamma_{\lambda,T}$ minimizes the free energy functional $\cF_{\lambda,T}(\Gamma)$ in \eqref{eq:free ener}, 
$$
-T\log Z_{\lambda,T}= \cF_{\lambda,T}(\Gamma_{\lambda,T}) \le \cF_{\lambda,T}(\Gamma_{0,T})=-T\log Z_{0,T} + \lambda \Tr[w \Gamma_{0,T}^{(2)}].
$$
Inserting \eqref{eq:dm-free-Gibbs} and $\lambda=T^{-1}$ into the latter estimate, we conclude that  
$$
-\log \frac{Z_{\lambda,T}}{Z_{0,T}} \le \lambda T^{-1} \Tr[w \Gamma_{0,T}^{(2)}] \le \tr [w \, h ^{-1} \otimes h ^{-1}] <\infty. 
$$
Here the last estimate is taken from Corollary~\ref{cor:NL gibbs}. 
\end{proof}

The second lemma is a well-known comparison result for the heat kernels of Schr\"odinger operators on $L^2(\R^n)$ (with no symmetrization).

\begin{lemma}[\textbf{Heat kernel estimate}]\label{lem:int ker bound}\mbox{}\\
Consider two Schr\"odinger operators $K_j= -\Delta_{\R^n}+ W_j$ on $L^2(\R^n)$, $j=1,2$,  with $W_1 \ge W_2\ge 0$. Then for all $t>0$, we have the integral kernel estimate  
\begin{equation}\label{eq:int ker bound}
0\leq \exp(-tK_1)(X_n;Y_n) \leq \exp(-t K_2) (X_n;Y_n) 
\end{equation}
for almost every $(X_n;Y_n)\in \R^{n} \times \R ^n$.
\end{lemma}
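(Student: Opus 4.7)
The plan is to prove this classical heat-kernel domination result via the Trotter product formula, following the spirit of~\cite[Theorem~6.3.17]{BraRob2} but avoiding the full Feynman-Kac representation, as the authors have signaled their intent.

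First, I would set up the symmetric Trotter approximants
$$
T_{j,m} := \left( e^{-(t/2m)W_j}\, e^{(t/m)\Delta_{\R^n}}\, e^{-(t/2m)W_j}\right)^{m}, \qquad j=1,2.
$$
Because $W_j\ge 0$ is bounded below (so that $K_j$ is essentially self-adjoint on, say, $C_c^\infty(\R^n)$), the Trotter product formula yields $T_{j,m}\to e^{-tK_j}$ in the strong operator topology on $L^2(\R^n)$. Next, I would write down the explicit integral kernel of $T_{j,m}$: the free heat kernel $p_s(x-y)=(4\pi s)^{-n/2}e^{-|x-y|^2/(4s)}\ge 0$ is the kernel of $e^{s\Delta}$, while $e^{-(t/2m)W_j}$ is a multiplication operator by a function in $[0,1]$. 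Composing, $T_{j,m}$ has the manifestly non-negative kernel
$$
T_{j,m}(X;Y)=\int_{\R^{n(m-1)}} e^{-\frac{t}{2m}W_j(X)}\prod_{k=1}^{m-1}\Big[p_{t/m}(z_{k-1}-z_{k})\,e^{-\frac{t}{m}W_j(z_{k})}\Big] p_{t/m}(z_{m-1}-Y)\,e^{-\frac{t}{2m}W_j(Y)}\,dz_1\cdots dz_{m-1}
$$
with $z_0=X$. The pointwise inequality $W_1\ge W_2\ge 0$ gives $e^{-sW_1(\cdot)}\le e^{-sW_2(\cdot)}$ for every $s\ge 0$, and since every other factor in the integrand is non-negative, we obtain at the finite-$m$ level
$$
0\le T_{1,m}(X;Y)\le T_{2,m}(X;Y)\qquad\text{for almost every }(X;Y)\in\R^{n}\times\R^n.
$$

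The final step is to transfer this inequality to the kernels of $e^{-tK_j}$. For any non-negative $\phi,\psi\in L^2(\R^n)$, the strong $L^2$ convergence yields
$$
\langle\phi, T_{j,m}\psi\rangle \xrightarrow[m\to\infty]{} \langle\phi, e^{-tK_j}\psi\rangle,
$$
so that $\langle\phi,(e^{-tK_2}-e^{-tK_1})\psi\rangle\ge 0$ and $\langle\phi, e^{-tK_j}\psi\rangle\ge 0$ for all non-negative $\phi,\psi\in L^2$. Since this holds for an arbitrary non-negative pair, a standard duality argument forces the corresponding integral kernels (which I would argue exist as measurable functions because each $e^{-tK_j}$ is positivity-preserving and dominated by the free heat semigroup whose kernel is Gaussian) to satisfy the same inequality a.e.

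The main obstacle I anticipate is the justification that $e^{-tK_j}$ actually possesses a measurable integral kernel, since only then does the weak operator inequality become a pointwise inequality of kernels. This is somewhat circular in the case $W_2=0$, but can be resolved by first establishing the kernel bound directly for the approximants (which trivially have kernels), deducing from $\sup_m T_{j,m}(X;Y)\le C p_t(X-Y)$ (after a more careful inspection of the Trotter expression) that any weak$^*$ subsequential limit of the $T_{j,m}$ is an integral operator bounded by the Gaussian, and identifying this limit with $e^{-tK_j}$ through the strong convergence. The rest of the argument is then essentially bookkeeping.
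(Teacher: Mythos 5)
Your proposal is correct and follows essentially the same route as the paper: the Trotter product formula, non-negativity of the free heat kernel, the pointwise monotonicity of $W\mapsto e^{-sW}$, testing against non-negative $L^2$ functions, and passage to the kernels. If anything you are more careful than the paper's own argument (which simply lets the test functions tend to delta functions) in justifying that the weak inequality upgrades to an almost-everywhere inequality of kernels via the Gaussian domination of the approximants; the use of the symmetric rather than the plain Trotter splitting is an immaterial variation.
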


\begin{proof}
This follows e.g. from the considerations of ~\cite[Sec.~II.6]{Simon-05}. According to the Trotter product formula (see e.g.~\cite[Theorem VIII.30]{ReeSim1} or~\cite[Theorem~1.1]{Simon-05}), we have, for any $\Psi_n,\Phi_n \in L ^2 (\R^n)$,
$$ 
\left\langle \Psi_n | \exp(-tK_j) | \Phi_n \right\rangle = \lim_{m\to \infty} \left\langle \Psi_n \big | \left( \exp \left( \frac{t\Delta_{\R^n}}{m} \right)  \exp\left( -\frac{t W_j}{m}  \right) \right) ^m \big| \Phi_n \right\rangle.
$$
In terms of integral kernels this means 
\begin{align*} 
&\int \overline{\Psi_n (X_n)} \exp(-tK_j) (X_n;Y_n) \Phi_n (Y_n) dX_n d Y_n \\
&\qquad = \lim_{m\to \infty} \int \overline{\Psi_n (X_n)} \exp \left( \frac{t\Delta_{\R^n}}{m} \right)  (X_n;Z_n ^1) \exp\left(-\frac{tW_j(Z_n^1)}{m}   \right) \ldots \\ 
&\qquad\qquad\qquad \exp \left( \frac{t\Delta_{\R^n}}{m} \right) (Z_n ^{m-1};Y_n)  \exp\left( -\frac{tW_j(Y_n)}{m}\right) \Phi_n (Y_n) dX_n dZ_n ^1 \ldots dZ_n ^{m-1} d Y_n
\end{align*}
where the $Z_n ^k = (z_1 ^k,\ldots,z_n ^k)$ are auxiliary sets of variables in $\R^n$ that we integrate over. Therefore, we can specialize to nonnegative functions $\Psi_n, \Phi_n$ and obtain
\begin{align} \label{eq:lower-upper-Trotter}
0 &\le \int \Psi_n (X_n) \exp(-tK_1) (X_n;Y_n) \Phi_n (Y_n) dX_n d Y_n \nn\\
&\qquad \qquad \qquad \le \int \Psi_n (X_n) \exp(-tK_2) (X_n;Y_n) \Phi_n (Y_n) dX_n d Y_n.
\end{align}
Here we have used the fact that the heat kernel $\exp(\frac{t}{m}\Delta_{\R^n})(X_n;Y_n)$ is positive and
$$ 0\le \exp\left(-\frac{tW_1}{m}\right) \le \exp\left(-\frac{tW_2}{m}\right)\quad \text{pointwise.}$$
There remains to let $\Psi_n,\Phi_n$ converge to delta functions in \eqref{eq:lower-upper-Trotter} to conclude the proof.
\end{proof}

Now we can give the 

\begin{proof}[Proof of Proposition~\ref{pro:new bounds}] Our bosonic state $\Gamma_{\lambda,T}$ can be written in the unsymmetrized Fock space in the manner 
\begin{equation}\label{eq:without sym}
\Gamma_{\lambda,T} = \frac{1}{Z_{\lambda,T}} \bigoplus_{n=0} ^{\infty} P_{\rm sym} ^n \exp\left( - T ^{-1} H_n \right)  
\end{equation}
with the symmetric projector
$$ P_{\rm sym} ^n = \frac{1}{n!} \sum_{\sigma \in S_n} U_{\sigma}.$$
Here the sum is over the permutation group $S_n$ and $U_{\sigma}$ is the unitary operator permuting variables according to $\sigma$. We consider $P_{\rm sym} ^n \exp\left( - T ^{-1} H_n \right)$ as an operator on $L^2 (\R ^n)$. Note that $P_{\rm sym} ^n$ commutes with $H_n$ and, in terms of integral kernels, 
$$
\left[ P_{\rm sym} ^n \exp\left( - T ^{-1} H_n \right) \right](X_n;Y_n) = \frac{1}{n!}\sum_{\sigma\in S_n}  \left[ \exp\left( - T ^{-1} H_n \right) \right](\sigma\cdot X_n;Y_n)
$$
where $\sigma\cdot X_n=(x_{\sigma(1)},...,x_{\sigma(n)})$ are the permuted variables. 

By applying Lemma~\ref{lem:int ker bound} to the potentials
$$W_1(X_n)=\sum_{j=1}^n V(x_j)+\lambda \sum_{1\le i<j\le n} w(x_i-x_j) \ge \sum_{j=1}^n V(x_j) =W_2(X_n)$$
(as $w\ge 0$) we have 
$$
\left[ \exp\left( - T ^{-1} H_n \right) \right](X_n;Y_n) \le \left[ \exp\left( - T ^{-1} \sum_{j=1}^n h_j \right) \right](X_n;Y_n).
$$
Since the kernel estimate remains unchanged by the symmetrization\footnote{Which would {\em not} be true if we were dealing with fermions, i.e. $P_{\rm sym} ^n$ was replaced by the antisymmetric projector.}, we have
\begin{align} \label{eq:int ker bound bis}
\left[ P_{\rm sym} ^n \exp\left( - T ^{-1} H_n \right) \right](X_n;Y_n) \le \left[ P_{\rm sym} ^n \exp\left( - T ^{-1} \sum_{j=1}^n h_j \right) \right](X_n;Y_n).
\end{align}
Finally, by Definition~\eqref{eq:def_DM_partial}, the integral kernel of $\Gamma_{\lambda,T} ^{(k)}$ is given by 
$$ 
\Gamma_{\lambda,T} ^{(k)}(X_k;Y_k)= \frac{1}{Z_{\lambda,T}} \sum_{n\geq k} {n \choose k} \int \left[ P_{\rm sym} ^n \exp\left( - T ^{-1} H_n \right) \right] (X_k,Z_{n-k};Y_k,Z_{n-k}) dZ_{n-k} 
$$
with $Z_{n-k} = (z_{k+1},\ldots,z_n) \in \R ^{n-k}$. Inserting~\eqref{eq:int ker bound} into the latter formula, we thus obtain   
$$ 
0\le \Gamma_{\lambda,T} ^{(k)}(X_k;Y_k) \le \frac{Z_{0,T}}{Z_{\lambda,T}} \Gamma_{0,T} ^{(k)}(X_k;Y_k) \le C \Gamma_{0,T} ^{(k)}(X_k;Y_k).
$$
Here the last estimate follows from Lemma~\ref{lem:partition}. 
\end{proof}

\section{Proof of the main theorem}

As in~\cite{LewNamRou-14d}, our strategy is based on Gibbs' variational principle, which states that $\Gamma_{\lambda,T}$ minimizes the free energy functional $\cF_{\lambda,T}[\Gamma]$ in \eqref{eq:free ener}. It follows from a simple computation that $\Gamma_{\lambda,T}$ is also the unique minimizer for the {\em relative free energy functional}:
\begin{align} \label{eq:rel-energy}
-\log \frac{Z_{\lambda,T}}{Z_{0,T}}&=\frac{\cF_{\lambda,T}(\Gamma_{\lambda,T}) - \cF_{0,T}(\Gamma_{0,T})}{T} \nn\\
&= \inf_{\substack{\Gamma\geq 0\\ \tr_{\gF}\Gamma=1}} \Big( \cH(\Gamma,\Gamma_{0,T}) + T^{-2} \Tr[w \Gamma_{\lambda,T}] \Big).
\end{align}
Here  
$$\cH(\Gamma,\Gamma')= \tr_{\gF}\big(\Gamma(\log\Gamma-\log\Gamma')\big) \ge 0$$
is called the \emph{relative entropy}~\cite{OhyPet-93,Wehrl-78} of two states $\Gamma$ and $\Gamma'$. 

We will relate the quantum problem \eqref{eq:rel-energy} to  its classical version: The interacting Gibbs measure $\mu$ is the unique minimizer for the variational problem
\begin{equation} \label{eq:zr-rel}
-\log z_r = \inf_{\nu \text{~probability measure}} \Big( \cH_{\rm cl}(\nu, \mu_0) + \frac{1}{2} \int \langle u^{\otimes 2}, w  u^{\otimes 2} \rangle \,d\nu(u)\Big)
\end{equation}
where\footnote{Positivity of this quantity follows from Jensen's inequality. It is zero if and only if $\nu = \nu'$.} 
$$ \cH_{\rm cl} (\nu,\nu'):= \int_{\gH^{s}}\frac{d\nu}{d\nu'}(u)\log\left(\frac{d\nu}{d\nu'}(u)\right)\,d\nu'(u) \ge 0 $$
is the classical relative entropy of two probability measures $\nu$ and $\nu'$. 

Note that $\cH_{\rm cl}(\nu, \mu_0) = +\infty$ unless $\nu$ is absolutely continuous with respect to $\mu_0$, and the other term of the functional is positive. Thus the minimization above is amongst measures of the form $d\nu (u) = f(u) d\mu_0 (u)$ that all live over $L^4 (\R)$ as per Lemma~\ref{lem:free supp}. Hence the variational problem makes sense. To see that $\mu$ is  the unique minimizer, one argues exactly as in~\eqref{eq:min principle}. 

\subsection{Convergence of the relative partition function} Let us prove \eqref{eq:rel-partition}. We recall the following result from \cite[Lemma 8.3]{LewNamRou-14d}.

\begin{lemma}[\textbf{Free-energy upper bound} ]\label{lem:upper_bound}\mbox{}\\
Let $h>0$ satisfy $h^{-1}\in\gS^p (\gH)$ for some $1\le p<\infty$ and let $w\geq0$ satisfy  
$$\tr_{\gH_2}\Big[w\,h^{-1}\otimes h^{-1}\Big]<\ii.$$ 
Then we have
\begin{align} \label{eq:rel-upper}
\limsup_{T\to \infty} \left( -\log \frac{Z_{\lambda,T}}{Z_{0,T}} \right)\le -\log z_r.
\end{align}
\end{lemma}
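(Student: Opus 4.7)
The natural starting point is the Gibbs variational principle~\eqref{eq:rel-energy}, which reduces the desired upper bound to constructing a Fock-space trial state $\tilde\Gamma_T$ with
\begin{equation*}
\cH(\tilde\Gamma_T,\Gamma_{0,T}) + T^{-2}\Tr[w\,\tilde\Gamma_T^{(2)}] \longrightarrow -\log z_r \quad\text{as } T\to\infty.
\end{equation*}
Since the interacting Gibbs measure $\mu$ is the minimizer of~\eqref{eq:zr-rel} with value $-\log z_r$, it suffices to build $\tilde\Gamma_T$ from $\mu$ so that its two free-energy pieces converge to $\cH_{\rm cl}(\mu,\mu_0)$ and $\tfrac12\int \langle u^{\otimes 2}, w\,u^{\otimes 2}\rangle\,d\mu(u)$ separately.

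The standard recipe, following~\cite{LewNamRou-14d}, combines a finite-dimensional truncation with coherent-state quantization. Fix $K\geq 1$, set $V_K=\mathrm{span}(u_0,\ldots,u_K)$, and let $\mu_K$ denote the cylindrical projection of $\mu$ on $V_K$, with density $f_K = d\mu_K/d\mu_{0,K}$. Define
\begin{equation*}
\tilde\Gamma_{T,K} = \biggl(\int_{V_K} |\xi(\sqrt{T}\,u)\rangle\langle \xi(\sqrt{T}\,u)|\,f_K(u)\,d\mu_{0,K}(u)\biggr) \otimes \Gamma_{0,T}^{\perp K},
\end{equation*}
where $\xi(v) = e^{-|v|^2/2}\sum_n (n!)^{-1/2} v^{\otimes n}$ is the Bargmann coherent state on the Fock space over $V_K$, and $\Gamma_{0,T}^{\perp K}$ is the free Gibbs state on the Fock space over $V_K^\perp$. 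The $\sqrt{T}$ scaling matches the expected $O(T)$ particle number and gives $\tilde\Gamma_{T,K}^{(k)} \approx T^{k}\int |u^{\otimes k}\rangle\langle u^{\otimes k}|\,d\mu_K(u)$ at leading order.

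The two pieces of the free-energy functional are then analysed separately. The interaction term $T^{-2}\Tr[w\,\tilde\Gamma_{T,K}^{(2)}]$ converges to $\tfrac12\int\langle u^{\otimes 2}, w\,u^{\otimes 2}\rangle\,d\mu_K(u)$ by a direct computation from the coherent-state resolution of identity. The relative entropy factorises along the tensor structure: the $V_K^\perp$ factor vanishes by construction, while for the $V_K$ factor a Berezin--Lieb comparison, pairing the upper symbol of the integral part of $\tilde\Gamma_{T,K}$ with a lower Wick symbol of the free Gibbs state restricted to $V_K$, yields an entropy equal to $\cH_{\rm cl}(\mu_K,\mu_{0,K}) + o_T(1)$. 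Taking $T\to\infty$ first and then $K\to\infty$, the classical entropies converge by monotonicity (since $\mu_K$ is a conditional expectation of $\mu$), while the interaction terms converge by dominated convergence thanks to Corollary~\ref{cor:NL gibbs}.

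The main obstacle is the Berezin--Lieb step in finite dimensions: off-the-shelf coherent-state inequalities control quantum--classical entropy differences only up to errors that can blow up with $T$, so matching the leading orders requires careful tracking of subleading corrections (or a precise semiclassical expansion of both upper and lower symbols). Once that is in place, the rest reduces to routine applications of the monotone and dominated convergence theorems.
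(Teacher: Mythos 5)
The paper does not actually reprove this lemma: it invokes \cite[Lemma 8.3]{LewNamRou-14d} verbatim and only checks that the hypothesis $\tr_{\gH_2}[w\,h^{-1}\otimes h^{-1}]<\infty$ holds under the present assumptions (via Step~1 of the proof of Lemma~\ref{lem:free supp}). Your sketch is essentially a reconstruction of the proof of that cited lemma, and its architecture is the right one: Gibbs' variational principle in the form \eqref{eq:rel-energy}, a trial state built from the cylindrical projection $\mu_K$ of $\mu$ through coherent states $|\xi(\sqrt{T}u)\rangle$ on $\cF(V_K)$, tensored with the free Gibbs state on $\cF(V_K^\perp)$ (legitimate since $h$ commutes with $P_K$, so $\Gamma_{0,T}$ factorizes and the relative entropy is additive), a Berezin--Lieb bound on the $V_K$ factor, and the limits $T\to\infty$ then $K\to\infty$.

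The step you single out as the ``main obstacle'' is, as written, the one genuine gap in your proposal --- but the difficulty is not where you locate it, and it is fillable by two \emph{exact} facts rather than by a delicate semiclassical expansion. First, if $\Gamma=\int f(v)\,|\xi(v)\rangle\langle\xi(v)|\,dm(v)$ with $f\ge 0$ an upper (P-) symbol relative to the coherent-state resolution of the identity on $\cF(V_K)$, then convexity of $x\mapsto x\log x$ gives the Berezin--Lieb inequality $\Tr[\Gamma\log\Gamma]\le\int f\log f\,dm$ with \emph{no} $T$-dependent error. Second, the cross term $-\Tr[\Gamma\log\Gamma_{0,T}]$ needs no symbol calculus at all: $\log\Gamma_{0,T}=-T^{-1}\dGamma(h)-\log Z_{0,T}$ is quadratic in creation/annihilation operators, and its expectation in a coherent-state mixture is an explicit Gaussian integral. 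The only $T$-dependence to track is that the exact P-symbol of $\Gamma_{0,T}$ restricted to $\cF(V_K)$ is a Gaussian of covariance $(e^{h/T}-1)^{-1}$ rather than $Th^{-1}$, so after the rescaling $v=\sqrt{T}u$ the reference density differs from that of $\mu_{0,K}$ by $o_T(1)$ on the \emph{fixed} finite-dimensional space $V_K$; this is elementary. Replacing your vague ``careful tracking of subleading corrections'' by these two identities closes the argument. A minor further omission: you never use the hypothesis $\tr_{\gH_2}[w\,h^{-1}\otimes h^{-1}]<\infty$, which is precisely what gives a uniform (in $K$) bound on $\int\langle u^{\otimes 2},w\,u^{\otimes 2}\rangle\,d\mu_K(u)$ and justifies the dominated-convergence passage $K\to\infty$ in the interaction term.
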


Note that 
$$\tr_{\gH_2}\Big[w\,h^{-1}\otimes h^{-1}\Big]=\iint_{\R\times \R} w(x-y)\,h^{-1}(x;x)\,h^{-1}(y;y)\,dx\,dy$$ 
is finite by the proof of Proposition~\ref{lem:free}, under our assumptions on $h$ and $w$. Therefore, the upper bound \ref{eq:rel-upper} holds true. The main difficulty is to establish the matching lower bound. To do this, we need two tools from \cite{LewNamRou-14d}. 

The first one is a variant of the quantum de Finetti Theorem in Fock space~\cite[Theorem~4.2]{LewNamRou-14d} (whose proof goes back to the analysis of~\cite{AmmNie-08,LewNamRou-14}, see~\cite{Rougerie-LMU,Rougerie-cdf} for a general presentation).

\begin{theorem}[\textbf{Quantum de Finetti theorem in Schatten classes}]\label{lem:deFinetti}\mbox{}\\
Let  $\{\Gamma_n\}$ be a sequence of states on the bosonic Fock space $\gF$, namely $\Gamma_n$ is a self-adjoint operator with $\Gamma_n\ge 0$ and $\tr_{\gF}\Gamma_n=1$. Assume that there exists a sequence $\epsilon_n\to0^+$ such that
\begin{equation}
(\epsilon_n)^{pk} \Tr_{\gH_k} \Big[\Big(\Gamma_n^{(k)} \Big)^p \Big] \le C_k\ <\infty, 
\label{eq:bound_DM}
\end{equation}
for some $1\leq p<\ii$ and for all $k\ge 1$. Let $h >0$ be a self-adjoint operator on $\gH$ with 
\begin{equation}\label{eq:h-p again}
 \tr_{\gH} [ h ^{-p}] < \infty 
\end{equation}
and $\gH ^{1-p}$ the associated Sobolev space~\eqref{eq:Sobolev}.

Then, up to a subsequence of $\{\Gamma_n\}$, there exists a Borel probability measure $\nu$ on $\gH^{1-p}$ (invariant under multiplication by a phase factor), called the de Finetti measure of $\Gamma_{n}$ at scale $\eps_{n}$, such that  
\begin{equation}
k!(\epsilon_{n})^k\,\Gamma_{n}^{(k)}\wto \int_{\gH^{1-p}} |u^{\otimes k}\rangle \langle u^{\otimes k}| \,d\nu(u)
\label{eq:weak_limit_DM}
\end{equation}
weakly-$\ast$ in $\gS ^p (\gH _k)$ for every $k\ge 1$.
\end{theorem}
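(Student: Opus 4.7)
The plan is to adapt the trace-class ($p=1$) quantum de Finetti theorem for Fock states (due to Ammari-Nier and revisited in~\cite{LewNamRou-14d}) to the Schatten-$p$ setting. Three ingredients are needed: weak-$*$ compactness in $\gS^p$, construction of a candidate measure $\nu$ via finite-dimensional truncations, and a tightness estimate that places the support of $\nu$ in $\gH^{1-p}$.

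\textbf{Compactness and finite-dimensional identification.} Hypothesis \eqref{eq:bound_DM} means exactly that $\{k!(\epsilon_n)^k\Gamma_n^{(k)}\}_n$ is bounded in $\gS^p(\gH_k)$ for every $k$. Since $\gS^p(\gH_k)$ is reflexive for $1<p<\infty$ and is the dual of the compact operators for $p=1$, Banach-Alaoglu combined with a Cantor diagonal extraction produces a subsequence along which $k!(\epsilon_n)^k\Gamma_n^{(k)}\wto\gamma^{(k)}$ weakly-$*$ in $\gS^p(\gH_k)$ for every $k$; the limits $\gamma^{(k)}$ are positive and bosonic-symmetric. Letting $P_K$ denote the spectral projector of $h$ onto $V_K=\mathrm{span}(u_0,\ldots,u_K)$, the truncations $P_K^{\otimes k}\gamma^{(k)}P_K^{\otimes k}$ are finite-rank and form a consistent bosonic hierarchy on the Fock space over $V_K$. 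Applying the finite-dimensional quantum de Finetti theorem for Fock states (as in Ammari-Nier) produces Borel probability measures $\nu_K$ on $V_K$ with
\[P_K^{\otimes k}\gamma^{(k)}P_K^{\otimes k}=\int_{V_K}|u^{\otimes k}\rangle\langle u^{\otimes k}|\,d\nu_K(u),\qquad k\ge 1,\]
that are compatible under further orthogonal projection (by uniqueness of the moment problem in finite dimension).

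\textbf{Tightness on $\gH^{1-p}$ and upgrade to the full statement.} Lower semicontinuity of $\gS^p$ norms under weak-$*$ convergence gives $\|\gamma^{(1)}\|_{\gS^p(\gH)}<\infty$. H\"older's inequality in Schatten classes, combined with the arithmetic identity $(1-p)p'=-p$ where $1/p+1/p'=1$, yields uniformly in $K$
\[\int_{V_K}\|u\|_{\gH^{1-p}}^2\,d\nu_K(u)=\tr\bigl[h^{1-p}P_K\gamma^{(1)}P_K\bigr]\le\bigl(\tr[h^{-p}]\bigr)^{1/p'}\|\gamma^{(1)}\|_{\gS^p}.\]
Because $h$ has compact resolvent, the embedding $\gH^t\hookrightarrow\gH^{1-p}$ is compact for every $t>1-p$, so Markov's inequality supplies tightness of $\{\nu_K\}$ in $\gH^{1-p}$ and Prokhorov's theorem promotes the compatible projective family to a genuine Borel probability measure $\nu$ on $\gH^{1-p}$. \emph{The main difficulty} is the final step: one must upgrade the projected identity above to the full statement \eqref{eq:weak_limit_DM}. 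For any test operator $A\in\gS^{p'}(\gH_k)$ (or compact, when $p=1$) one decomposes $A=P_K^{\otimes k}AP_K^{\otimes k}+$ tail and must show that the tail contribution vanishes uniformly in $n$ as $K\to\infty$. This exploits both the quantitative Schatten bound on $\Gamma_n^{(k)}$ and the spectral decay $\tr[h^{-p}]<\infty$ -- precisely the ingredients that also force $\nu$ to live on $\gH^{1-p}$.
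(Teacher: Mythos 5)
Your proposal reconstructs the theorem from scratch, whereas the paper simply reduces it to \cite[Theorem 4.2]{LewNamRou-14d}: one checks Assumption (4.7) there by H\"older's inequality in Schatten classes (combining the $\gS^p$ bound \eqref{eq:bound_DM} with $\tr[h^{-p}]<\infty$), invokes that theorem to get the de Finetti measure and convergence in a weaker topology, and then upgrades to weak-$\ast$ convergence in $\gS^p(\gH_k)$ along a further subsequence using \eqref{eq:bound_DM} again. Re-proving the cited result is legitimate in principle, but as written your argument has two genuine gaps. First, your tightness step fails as stated: Prokhorov's theorem on $\gH^{1-p}$ requires compact sets, i.e.\ a uniform moment bound in some $\gH^{t}$ with $t>1-p$, and no such bound is available --- your own H\"older computation saturates exactly at the exponent $t=1-p$, since $h^{t}\in\gS^{p'}$ forces $tp'\le -p$, i.e.\ $t\le 1-p$. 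Bounded sets of $\gH^{1-p}$ are not compact in $\gH^{1-p}$, so Markov plus the compact embedding $\gH^{t}\hookrightarrow\gH^{1-p}$ gives nothing here. The fix is the criterion for countable additivity of cylindrical measures on a separable Hilbert space (Skorokhod's lemma, exactly as used in Proposition~\ref{lem:free} for $\mu_0$), which only needs the uniform second moment in $\gH^{1-p}$ itself; without this replacement the construction of $\nu$ on $\gH^{1-p}$ is not justified.

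Second, the step you yourself flag as ``the main difficulty'' is left unproved, and your framing of it misses where the difficulty actually lies. The uniform-in-$n$ tail control you announce is in fact the easy part: for $A\in\gS^{p'}(\gH_k)$ (compact $A$ when $p=1$) one has $P_K^{\otimes k}AP_K^{\otimes k}\to A$ in $\gS^{p'}$ norm, so H\"older and \eqref{eq:bound_DM} kill the tail uniformly in $n$. The real issues are (i) justifying the identity $P_K^{\otimes k}\gamma^{(k)}P_K^{\otimes k}=\int |(P_K u)^{\otimes k}\rangle\langle (P_K u)^{\otimes k}|\,d\nu$, which does not follow from ``applying the finite-dimensional de Finetti theorem to the hierarchy'': the projected operators are not density matrices of a state on $\cF(V_K)$ until you perform the Fock-space localization (partial trace over $\cF((1-P_K)\gH)$), and the grand-canonical finite-dimensional statement itself needs the bounds \eqref{eq:bound_DM} for \emph{all} $k$ to prevent escape of particle number; relatedly, consistency of the family $\{\nu_K\}$ should come from compatibility of localization, not from ``uniqueness of the moment problem in finite dimension'', which is not automatic in several variables; and (ii) passing to the limit $K\to\infty$ on the measure side and giving a meaning to $\int|u^{\otimes k}\rangle\langle u^{\otimes k}|\,d\nu$ as an element of $\gS^p(\gH_k)$, since $\nu$ charges vectors $u\in\gH^{1-p}\setminus\gH$ for which $|u^{\otimes k}\rangle\langle u^{\otimes k}|$ is not even a bounded operator on $\gH_k$; this requires either a restricted class of test operators or an a priori bound on the limiting object, and is precisely the content of the machinery of \cite[Theorem 4.2]{LewNamRou-14d} that you would be re-deriving. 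Until these points are supplied, the proposal is a plausible outline rather than a proof.
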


\begin{proof}
This follows straightforwardly from~\cite[Theorem 4.2]{LewNamRou-14d}. Using~\eqref{eq:bound_DM},~\eqref{eq:h-p again} and the H\"older inequality in Schatten spaces, one readily checks that Assumption~(4.7) of~\cite[Theorem 4.2]{LewNamRou-14d} is satisfied for all integer $s$. Convergence of density matrices, along a subsequence, to the right-hand side of~\eqref{eq:weak_limit_DM} in a weaker topology is then Statement~(4.9) of~\cite[Theorem 4.2]{LewNamRou-14d}. Passing to a further subsequence,~\eqref{eq:bound_DM} allows to get weakly-$\ast$ convergence in $\gS ^p (\gH _k)$.
\end{proof}

The second tool is a link between the quantum relative entropy and the classical one, taken from \cite[Theorem 7.1]{LewNamRou-14d} (this is a Berezin-Lieb-type inequality, its proof goes back to the techniques in~\cite{Berezin-72,Lieb-73b,Simon-80}).

\begin{theorem}[\textbf{Relative entropy: quantum to classical}] \label{lem:rel-entropy}\mbox{}\\
Let $\{\Gamma_n\}$  and $\{\Gamma'_n\}$ be two sequences of states on the bosonic Fock space $\gF$. Assume that they satisfy the assumptions of Theorem~\ref{lem:deFinetti} with the same scale $\epsilon_n\to0^+$ and the same power $p\ge 1$. Let $\mu$ and $\mu'$ be the corresponding de Finetti measures. Then 
$$
\liminf_{n\to \infty} \cH (\Gamma_n,\Gamma_n') \ge \cH_{\rm cl} (\mu,\mu').$$
\end{theorem}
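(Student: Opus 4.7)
The plan is to reduce the quantum relative entropy to its classical counterpart via a Donsker--Varadhan (Gibbs) variational characterization, tested against observables built from the one-body data at the semiclassical scale $\eps_n$. The starting point is the formula
$$ \cH(\Gamma_n, \Gamma'_n) = \sup_{A}\Big\{ \Tr_{\gF}[A\, \Gamma_n] \;-\; \log \Tr_{\gF}\big[\exp(A + \log \Gamma'_n)\big] \Big\}, $$
with supremum over bounded self-adjoint $A$ on $\gF$ for which the exponential is trace-class, together with its classical counterpart $\cH_{\rm cl}(\mu,\mu') = \sup_{f}\{\int f\,d\mu - \log \int e^{f}\,d\mu'\}$ over bounded continuous cylindrical functions. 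It suffices, for each admissible test $f$, to exhibit a quantum observable $\mathbf{A}_n(f)$ for which both the linear and the logarithmic terms pass to their classical counterparts.

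I would take $f$ to be a polynomial in the variables $\langle u_j, u\rangle$ associated with the first $K$ eigenfunctions of $h$, and quantize it by anti-Wick (Toeplitz) quantization at scale $\eps_n$: using coherent states $|z\rangle$ built on $V_K = \mathrm{span}(u_0,\dots,u_K)$, set $\mathbf{A}_n(f) = \eps_n^{-1}\int f(\sqrt{\eps_n}\,z)\,|z\rangle\langle z|\,dz$ with the appropriate normalization. Expanding $\mathbf{A}_n(f)$ as a sum of symmetric $k$-body operators, the linear term is controlled directly by Theorem~\ref{lem:deFinetti}: each pairing $\eps_n^k\,\Tr_{\gH_k}[A_k\,\Gamma_n^{(k)}]$ converges to $\int \langle u^{\otimes k}, A_k u^{\otimes k}\rangle\, d\mu(u)$ by the weak-$\ast$ convergence of density matrices in $\gS^p(\gH_k)$, yielding $\Tr_{\gF}[\mathbf{A}_n(f)\,\Gamma_n] \to \int f\,d\mu$.

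The main obstacle is the asymptotics of the log-partition-function term,
$$ \log \Tr_{\gF}\big[\exp(\mathbf{A}_n(f) + \log \Gamma'_n)\big] \;\longrightarrow\; \log \int e^{f}\,d\mu', $$
which is a Berezin--Lieb-type statement in the semiclassical regime $\eps_n \to 0$. The idea is to sandwich the quantum log-partition function between two classical integrals built from lower and upper coherent-state symbols: Peierls' inequality evaluated against the (over)complete family of coherent states produces one bound, and Jensen's inequality applied to the resolution of identity $\int |z\rangle\langle z|\,dz = \1$ produces the matching one. These inequalities are first proved on the truncated Fock space built over $V_K$, where the Berezin--Lieb mechanism is classical and yields $\log \int e^{f}\,d\mu'_K$ with $\mu'_K$ the cylindrical projection of $\mu'$ on $V_K$. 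The delicate point is to send $K\to\infty$ uniformly in $n$: the tails of the coherent-state representation on the infinite-dimensional one-body space must be controlled, which is precisely where the Schatten assumption~\eqref{eq:bound_DM} and the tightness of the de Finetti measures on $\gH^{1-p}$ supplied by Theorem~\ref{lem:deFinetti} are used.

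Combining the two limits with the variational formula gives, for every admissible cylindrical polynomial $f$,
$$ \liminf_{n\to\infty}\cH(\Gamma_n,\Gamma_n') \;\ge\; \int f\,d\mu \;-\; \log\int e^{f}\,d\mu'. $$
A standard density argument extending from cylindrical polynomials to all bounded continuous cylindrical test functions, and finally taking the supremum over $f$, reproduces $\cH_{\rm cl}(\mu,\mu')$ on the right and completes the proof.
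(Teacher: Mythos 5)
The statement is not reproved in this paper: it is imported from \cite[Theorem 7.1]{LewNamRou-14d}, whose proof runs through the monotonicity of the quantum relative entropy under 2-positive trace-preserving maps (Fock-space localization to the finite-dimensional subspace $V_K$, followed by the coherent-state ``measurement'' turning a state into its Husimi measure), then lower semicontinuity of the classical relative entropy, convergence of the Husimi measures to the de Finetti measures, and monotonicity in $K$. Your dual route (Donsker--Varadhan variational formulas on both sides, tested with anti-Wick quantized cylindrical observables) is a conceivable alternative, but it has a genuine gap exactly at the step you identify as the main obstacle. The asymptotics $\log \Tr_\gF[\exp(\mathbf{A}_n(f)+\log\Gamma_n')]\to\log\int e^f d\mu'$ cannot be obtained by the sandwich you describe: the Jensen/resolution-of-identity (upper Berezin--Lieb) bound requires an \emph{upper} (anti-Wick) symbol for the operator in the exponential, and while $f$ is by construction an upper symbol for $\mathbf{A}_n(f)$, nothing whatsoever is known about an upper symbol of $\log\Gamma_n'$: the state $\Gamma_n'$ is arbitrary, constrained only through Schatten bounds on its density matrices, and its de Finetti measure $\mu'$ encodes lower-symbol (Husimi/moment) information only. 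Peierls' inequality goes the wrong way for this purpose (it bounds the trace from below, whereas the lower bound on $\cH$ requires an upper bound on $\Tr e^{\mathbf{A}_n(f)+\log\Gamma_n'}$). A possible repair is Golden--Thompson, $\Tr[e^{\mathbf{A}_n(f)+\log\Gamma_n'}]\le\Tr[e^{\mathbf{A}_n(f)}\Gamma_n']$, followed by a semiclassical comparison of $e^{\mathbf{A}_n(f)}$ with the anti-Wick quantization of $e^f$ (Toeplitz symbol calculus with $O(\eps_n)$ errors) and the convergence of the Husimi measures of $\Gamma_n'$; none of this is in your sketch, and it is the heart of the matter. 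Note also that only the one-sided bound $\limsup_n\log\Tr[e^{\mathbf{A}_n(f)+\log\Gamma_n'}]\le\log\int e^f d\mu'$ is needed; the two-sided convergence you assert is stronger than what holds for general $\Gamma_n'$.

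There is in addition an internal tension in your choice of test functions. Taking $f$ to be a cylindrical \emph{polynomial} makes $\mathbf{A}_n(f)$ a finite sum of $k$-body observables, so the linear term can indeed be paired with the density matrices; but then $\mathbf{A}_n(f)$ is unbounded and $\Tr[\exp(\mathbf{A}_n(f)+\log\Gamma_n')]$ may be $+\infty$ (already for $f(u)=|\langle u_0,u\rangle|^2$, where $\mathbf{A}_n(f)\simeq \eps_n a_0^*a_0$, if $\Gamma_n'$ has heavy particle-number tails), making the variational inequality vacuous. For \emph{bounded} cylindrical $f$ the log-trace term is controllable, but $\mathbf{A}_n(f)$ is no longer a finite sum of $k$-body terms, and the weak-$\ast$ convergence of finitely many density matrices in Theorem~\ref{lem:deFinetti} does not by itself give $\Tr[\mathbf{A}_n(f)\Gamma_n]\to\int f\,d\mu$; one must invoke the convergence of the localized Husimi measures, which is how the de Finetti measure is constructed in \cite{LewNamRou-14d} but is not the statement you quote. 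Finally, no uniformity in $n$ is needed for $K\to\infty$: fix $K$ and $f$, pass to the liminf in $n$, and only at the end use the monotone (martingale) convergence $\cH_{\rm cl}(\mu\circ P_K^{-1},\mu'\circ P_K^{-1})\nearrow\cH_{\rm cl}(\mu,\mu')$. The channel-monotonicity proof of \cite[Theorem 7.1]{LewNamRou-14d} avoids all of these difficulties because it never needs to control $\log\Gamma_n'$ as an operator.
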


Now we are ready to prove a lower bound to the relative free energy matching the upper bound of Lemma~\ref{lem:upper_bound}. 

\begin{lemma}[\textbf{Free-energy lower bound} ]\label{lem:lower_bound}\mbox{}\\
With the notation and assumptions of Theorem~\ref{thm:main} we have
\begin{align} \label{eq:rel-lower}
\liminf_{T\to \infty} \left( -\log \frac{Z_{\lambda,T}}{Z_{0,T}} \right)\ge -\log z_r.
\end{align}
\end{lemma}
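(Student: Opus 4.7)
The plan is to evaluate Gibbs' variational principle \eqref{eq:rel-energy} at $\Gamma_{\lambda,T}$, which yields the identity
\[
-\log \frac{Z_{\lambda,T}}{Z_{0,T}} = \cH(\Gamma_{\lambda,T},\Gamma_{0,T}) + T^{-2}\Tr_{\gH_2}[w\,\Gamma_{\lambda,T}^{(2)}],
\]
and to bound each of the two terms from below as $T \to \infty$, so as to recover the right-hand side of \eqref{eq:zr-rel} for an appropriate candidate $\mu$. After fixing a subsequence $T_n \to \infty$ realizing the $\liminf$ of the left-hand side, I would extract the classical limit via the de Finetti machinery.

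For the de Finetti step I apply Theorem~\ref{lem:deFinetti} with $p = 2$ and scale $\epsilon_n = 1/T_n$. The uniform Hilbert--Schmidt bound required on $\Gamma_{\lambda,T_n}^{(k)}$ is precisely \eqref{eq:new bound} from Proposition~\ref{pro:new bounds}, while the condition $\tr h^{-2} < \infty$ holds under the assumption $s > 1$ by Proposition~\ref{lem:free}. Passing to a further subsequence, this produces a probability measure $\mu$ on $\bigcap_{t<1/2-1/s}\gH^t$ with $k!\,T_n^{-k}\Gamma_{\lambda,T_n}^{(k)} \wto \int |u^{\otimes k}\rangle\langle u^{\otimes k}|\,d\mu(u)$ weakly-$\ast$ in $\gS^2(\gH_k)$ for every $k$. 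Applying the same scheme to $\Gamma_{0,T_n}$ (whose HS control follows from \eqref{eq:dm-free-Gibbs}) produces a measure which, by matching the moments $T_n^{-k}\Gamma_{0,T_n}^{(k)} \to (h^{-1})^{\otimes k}$ with \eqref{eq:DM free meas}, must coincide with $\mu_0$. Theorem~\ref{lem:rel-entropy} then yields immediately $\liminf_n \cH(\Gamma_{\lambda,T_n},\Gamma_{0,T_n}) \ge \cH_{\rm cl}(\mu,\mu_0)$.

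The main step is then the interaction lower bound,
\[
\liminf_{n\to\infty} T_n^{-2}\Tr_{\gH_2}[w\,\Gamma_{\lambda,T_n}^{(2)}] \;\ge\; \tfrac{1}{2}\int \langle u^{\otimes 2}, w\,u^{\otimes 2}\rangle\,d\mu(u).
\]
Writing this as $\iint w(x-y)\,T_n^{-2}\rho_{\lambda,T_n,2}(x,y)\,dx\,dy$ with $\rho_{\lambda,T_n,2}(x,y) := \Gamma_{\lambda,T_n}^{(2)}(x,y;x,y)$, the kernel estimate \eqref{eq:new kernel bound} together with the pointwise bound $\gamma_{0,T}(x;x) \le T\,h^{-1}(x;x)$ (from the series $(e^{h/T}-1)^{-1} = \sum_{k\ge 1} e^{-kh/T}$ and the analogous representation of $h^{-1}$) produces the uniform domination
\[
0 \le T_n^{-2}\rho_{\lambda,T_n,2}(x,y) \le C\,h^{-1}(x;x)\,h^{-1}(y;y),
\]
whose right-hand side is integrable against $w(x-y)$ by the computation in Corollary~\ref{cor:NL gibbs}. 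For any bounded compactly supported truncation $w^K \le w$ with $w^K \nearrow w$ pointwise, this diagonal domination combined with a position-space cutoff upgrades the weak-$\ast$ $\gS^2$ convergence of $T_n^{-2}\Gamma_{\lambda,T_n}^{(2)}$ to a local trace-class convergence sufficient to pass $w^K$ to the limit; monotone convergence as $K\to\infty$ then yields the claim. Combining the entropy and interaction lower bounds with the variational formula \eqref{eq:zr-rel} completes the proof.

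The principal obstacle is exactly this interaction-term step. The multiplication operator $w(x-y)$ is not Hilbert--Schmidt on $L^2(\R^2)$, so the weak-$\ast$ $\gS^2$ convergence produced by de Finetti cannot be paired with it directly; unlike the trace-class setting of \cite{LewNamRou-14d}, where this passage is essentially automatic, here the pointwise kernel estimate of Proposition~\ref{pro:new bounds} is the crucial new ingredient. It serves both as the source of an explicit pointwise dominating function in $L^1(w(x-y)\,dx\,dy)$ and as the mechanism upgrading the weak convergence to local trace-class convergence once a position-space cutoff is introduced.
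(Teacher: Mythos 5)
Your proposal is correct, and its skeleton coincides with the paper's: evaluate the relative free energy identity at $\Gamma_{\lambda,T}$, extract a de~Finetti measure $\mu$ at scale $1/T_n$ using the Hilbert--Schmidt bounds of Proposition~\ref{pro:new bounds} with $p=2$, identify $\mu_0$ as the de~Finetti measure of the free states via \eqref{eq:dm-free-Gibbs} and \eqref{eq:DM free meas}, and invoke Theorem~\ref{lem:rel-entropy} for the entropy term. The only genuine divergence is in the interaction term. The paper gets the a~priori bound $T_n^{-2}\Tr[w^{1/2}\Gamma^{(2)}_{\lambda_n,T_n}w^{1/2}]\le C$ for free from the upper bound of Lemma~\ref{lem:upper_bound} together with the nonnegativity of the relative entropy in \eqref{eq:rel-energy}, then extracts a weak-$\ast$ trace-class limit of $T_n^{-2}w^{1/2}\Gamma^{(2)}w^{1/2}$, identifies it through the weak $\gS^2$ convergence, and concludes by lower semicontinuity of the trace (Fatou). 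You instead build an explicit pointwise dominating function $C\,h^{-1}(x;x)h^{-1}(y;y)$ for the diagonal of $T_n^{-2}\Gamma^{(2)}_{\lambda_n,T_n}$ from the kernel estimate \eqref{eq:new kernel bound}, truncate $w$ and cut off in position space, and use local trace-class convergence plus monotone convergence. Your route works (the local trace-class convergence you need is essentially Lemma~\ref{lem:strong-cv-chi-Gamma}, whose proof does not depend on the identification $\nu=\mu$, and in fact for a lower bound weak-$\ast$ local convergence plus Fatou already suffices, so you could skip the upgrade entirely), but it is heavier than necessary: the paper's use of the free-energy upper bound as an a~priori estimate lets it bypass both the truncation and the local-convergence machinery, reserving the kernel/local arguments for the later upgrade of \eqref{eq:main DM} from weak to strong convergence. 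Your version has the mild advantage of not needing Lemma~\ref{lem:upper_bound} as an input to the interaction bound.
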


\begin{proof}
We pass to the liminf first in the relative entropy and then in the interaction energy.

\medskip

\noindent\textbf{Step 1.} From the Hilbert-Schmidt estimate \eqref{eq:new bound} in Proposition \ref{pro:new bounds}, we can apply Theorem~\ref{lem:deFinetti} to the sequence $\{\Gamma_{\lambda_n,T_n}\}$ for any $T_n \to \infty$, with scale $\eps_n=T_n^{-1}$. Thus, up to a subsequence of $\{\Gamma_{\lambda_n,T_n}\}$, there exists is a Borel probability measure $\nu$ on $\gH^{-1}$ (the de Finetti measure for $\{\Gamma_{\lambda_n,T_n}\}$) such that
\begin{equation}
\frac{k!}{T_n^k}\Gamma_{\lambda_n,T_n}^{(k)}\wto \int_{\gH^{-1}} |u^{\otimes k}\rangle \langle u^{\otimes k}| \,d\nu(u)
\label{eq:weak_limit_DM_n}
\end{equation}
weakly  in $\gS ^2 (\gH _k)$ for every $k\ge 1$. Next, from \eqref{eq:dm-free-Gibbs} and \eqref{eq:DM free meas}, by Lebesgue's Dominated Convergence Theorem we find that
$$
\frac{k!}{T ^k} \Gamma_{0,T}^{(k)} \to \int |u ^{\otimes k} \rangle \langle u ^{\otimes k} | d\mu_0(u) 
$$
strongly in $\gS ^2 (\gH _k)$ for every $k\ge 1$. In particular, the free Gibbs measure $\mu_0$ is the de Finetti measure for the sequence $\{\Gamma_{0,T_n}\}$ with scale $\eps_n=T_n^{-1}$.  Therefore, Lemma~\ref{lem:rel-entropy} implies that
\begin{equation} \label{eq:rel-Gn}
\liminf_{n\to \infty} \cH (\Gamma_{\lambda_n,T_n},\Gamma_{0,T_n}) \ge  \cH_{\rm cl} (\nu,\mu_0). 
\end{equation}
Consequently, $\cH_{\rm cl} (\nu,\mu_0)$ is finite and thus $\nu$ is absolutely continuous with respect to $\mu_0$. In particular, $\nu$ is supported on $L^4(\R)$ by Lemma~\ref{lem:free supp}. 

\medskip

\noindent\textbf{Step 2.} From Lemma~\ref{lem:upper_bound} and the variational principle, it follows that 
$$ T_n^{-2}\Tr \left[w^{1/2} \Gamma_{\lambda_n,T_n} w ^{1/2} \right] \leq C $$
and thus the positive operator $T_n ^{-2}w^{1/2} \Gamma_{\lambda_n,T_n} w ^{1/2}$ has a trace-class weak-$\ast$ limit along a subsequence.  
Using~\eqref{eq:weak_limit_DM_n} with $k=2$ to identify the limit and Fatou's lemma for operators\footnote{Lower semi-continuity of the trace in the weak-$\ast$ topology.}, we get
\begin{equation} \label{eq:rel-wn}
\liminf_{n\to \infty} T_n^{-2}\Tr \left[w \Gamma_{\lambda_n,T_n} \right] \ge \frac{1}{2} \int \langle u^{\otimes 2}, w  \, u^{\otimes 2} \rangle \,d\nu(u).
\end{equation}
Note that on the right side of \eqref{eq:rel-wn}, $\langle u^{\otimes 2}, w  \, u^{\otimes 2} \rangle$ is finite when $u\in L^r(\R)$ for $\max(2,4/s)<r<\ii$. 

Putting \eqref{eq:rel-Gn} and \eqref{eq:rel-wn} together, then combining with \eqref{eq:rel-energy} and \eqref{eq:zr-rel}, we arrive at
\begin{align} \label{eq:rel-lower bis}
\liminf_{n\to \infty} \left( -\log \frac{Z_{\lambda_n,T_n}}{Z_{0,T_n}} \right) &= \liminf_{n\to \infty}  \left( \cH (\Gamma_{\lambda_n,T_n},\Gamma_{0,T_n}) + T_n^{-2}\Tr \Big[w \Gamma_{\lambda_n,T_n} \Big]  \right) \nn\\
&\ge  \cH_{\rm cl} (\nu,\mu_0) + \frac{1}{2} \int \langle u^{\otimes 2}, w  u^{\otimes 2} \rangle \,d\nu(u) \ge -\log z_r.
\end{align}
From~\eqref{eq:rel-lower bis} and the upper bound \eqref{eq:rel-upper}, we conclude that 
\begin{align} \label{eq:rel-equal}
\liminf_{n\to \infty} \left( -\log \frac{Z_{\lambda_n,T_n}}{Z_{0,T_n}} \right) = \cH_{\rm cl} (\nu,\mu_0) + \frac{1}{2} \int \langle u^{\otimes 2}, w  u^{\otimes 2} \rangle \,d\nu(u) = -\log z_r.
\end{align}
\end{proof}

Since the interacting Gibbs measure $\mu$ is the unique minimizer for \eqref{eq:zr-rel}, we deduce from~\eqref{eq:rel-equal} 
$$\nu=\mu.$$
Moreover, we can remove the dependence of the subsequence $T_n$ in \eqref{eq:rel-equal} and \eqref{eq:weak_limit_DM_n} since the limiting objects are unique, and thus obtain the corresponding convergences for the whole family, namely 
\begin{align} \label{eq:rel-equal-whole-T}
\lim_{T \to \infty} \left( -\log \frac{Z_{\lambda,T}}{Z_{0,T}} \right) =  -\log z_r,
\end{align}
which is equivalent to \eqref{eq:rel-partition}, and 
\begin{equation}\label{eq:weak_limit_DM_whole_T}
\frac{k!}{T^k}\Gamma_{\lambda,T}^{(k)}\wto \int |u^{\otimes k}\rangle \langle u^{\otimes k}| \,d\mu(u)
\end{equation}
weakly  in $\gS ^2 (\gH _k)$ for every $k\ge 1$. To complete the proof of Theorem~\ref{thm:main} we now onmy need to upgrade the last convergence from weak to strong.

\subsection{Strong convergence of density matrices}

There remains to upgrade the weak convergence in~\eqref{eq:weak_limit_DM_whole_T} to the strong convergence. 

\medskip

\noindent{\bf Case $k=1$.} For the one-body density matrix, the strong convergence follows from the Dominated Convergence Theorem (for operators), the weak convergence in \eqref{eq:weak_limit_DM_n} and the following estimate in~\cite[Lemma 8.2]{LewNamRou-14d} (whose proof is based on a Feynman-Hellmann argument). 

\begin{lemma}[\textbf{Operator bound on the one-particle density matrix}]\label{lem:1PDM_bounded}\mbox{}\\
Let $h>0$ satisfy $h^{-1}\in\gS^p (\gH)$ for some $p\ge 1$ and let $w\geq0$ satisfy  
$$\tr_{\gH^2}\Big[w\,h^{-1}\otimes h^{-1}\Big]<\ii.$$ 
Then we have
\begin{equation}
0\leq \Gamma_{\lambda,T}^{(1)}\leq CTh^{-1}.
\label{eq:1PDM_bounded}
\end{equation}
\end{lemma}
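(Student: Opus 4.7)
The plan is to employ a Feynman--Hellmann argument based on Gibbs' variational principle, combined with the partition-function estimate of Lemma~\ref{lem:partition}. The operator inequality is equivalent to
$$\tr_{\gH}\!\left[A\,\Gamma_{\lambda,T}^{(1)}\right] \leq CT\,\tr_{\gH}\!\left[Ah^{-1}\right]$$
for every bounded $A\geq 0$ on $\gH$, so the task reduces to producing this trace inequality with the right scaling in $T$.

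First I would introduce the perturbed partition function
$$Z_\lambda^{(s)} := \tr_{\gF}\exp\!\left(-T^{-1}(\bH_\lambda - s\,\dGamma(A))\right),\qquad s\geq 0,$$
and observe that $s\mapsto \log Z_\lambda^{(s)}$ is convex (it is a cumulant generating function) with derivative equal to $T^{-1}\tr_{\gH}[A\,\Gamma_{\lambda,T}^{(1)}]$ at $s=0$. The secant inequality for convex functions then gives
$$\tr_{\gH}\!\left[A\Gamma_{\lambda,T}^{(1)}\right] \leq \frac{T}{s}\log\frac{Z_\lambda^{(s)}}{Z_{\lambda,T}}$$
for every $s>0$ such that the perturbed Hamiltonian is bounded below.

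To control the right-hand side, I would restrict to $s$ small enough that $sA\leq h/2$, ensuring $h-sA\geq h/2>0$. The hypothesis $w\geq 0$ yields $\bH_\lambda - s\,\dGamma(A)\geq \dGamma(h-sA)$, so the numerator is bounded by the free bosonic partition function associated with $h-sA$; the denominator is bounded below by $C^{-1}Z_{0,T}$ thanks to Lemma~\ref{lem:partition}. A direct computation---differentiating $v\mapsto \log \tr_{\gF}\exp(-T^{-1}\dGamma(h-vA))$ and using the elementary bounds $(e^x-1)^{-1}\leq x^{-1}$ together with $(h-vA)^{-1}\leq 2h^{-1}$ for $v\in[0,s]$---then yields
$$\log\frac{Z_\lambda^{(s)}}{Z_{\lambda,T}} \leq 2s\,\tr_{\gH}[Ah^{-1}] + \log C,$$
and therefore $\tr[A\Gamma_{\lambda,T}^{(1)}]\leq 2T\,\tr[Ah^{-1}] + T(\log C)/s$.

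The final step is to optimize in $s$ under the constraint $sA\leq h/2$, i.e.\ $s\leq(2\|h^{-1/2}Ah^{-1/2}\|_\infty)^{-1}$. I expect this to be the only subtle point: the clean trace output relies on the elementary inequality
$$\|h^{-1/2}Ah^{-1/2}\|_\infty \leq \tr_{\gH}[h^{-1/2}Ah^{-1/2}] = \tr_{\gH}[Ah^{-1}],$$
which permits taking $s$ proportional to $\tr[Ah^{-1}]^{-1}$, so that the remainder $T(\log C)/s$ is absorbed into a constant multiple of $T\tr[Ah^{-1}]$. This produces the desired trace inequality with a constant depending only on that of Lemma~\ref{lem:partition}, and hence the operator bound $\Gamma_{\lambda,T}^{(1)}\leq C\,T\,h^{-1}$.
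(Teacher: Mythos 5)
Your proposal is correct and follows essentially the same route as the paper, which simply invokes \cite[Lemma 8.2]{LewNamRou-14d} and notes that its proof rests on a Feynman--Hellmann argument: perturbing $\bH_\lambda$ by $-s\,\dGamma(A)$, using convexity of the pressure together with $w\ge 0$ and the partition-function bound of Lemma~\ref{lem:partition}. Your optimization in $s$ with $\|h^{-1/2}Ah^{-1/2}\|_\infty\le\tr[Ah^{-1}]$ (checked on rank-one $A$, which suffices for the operator inequality) correctly yields a $T$-uniform constant, so the argument is complete.
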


\medskip

\noindent {\bf Case $k\ge 2$.} In this case an analogue of \eqref{eq:1PDM_bounded} is not available. Instead, we will use kernel estimates. Recall that from Proposition \ref{pro:new bounds} we know that 
\begin{equation}\label{eq:proof strong CV}
0\le \frac{\Gamma_{\lambda, T} ^{(k)} (X_k;Y_k)}{T^{k}(k!)} \leq C_k \frac{\Gamma_{0, T} ^{(k)} (X_k;Y_k)}{T^{k}(k!)}
\end{equation}
pointwise. Moreover, since $T^{-k}\Gamma_{0, T} ^{(k)}$ converges strongly to $\left( h ^{-1} \right) ^{\otimes k}$ in the Hilbert-Schmidt norm, its kernel converges strongly in $L^2$. It easily follows, using the Cauchy-Schwarz inequality, that 
\begin{equation}\label{eq:proof strong CV 2} 
\int_{\R^k \times \R ^k} \left| \frac{\Gamma_{0, T} ^{(k)} (X_k;Y_k) ^2}{T^{2k}(k!)^2} - \left( h ^{-1} \right) ^{\otimes k} (X_k;Y_k) ^2 \right| dX_k dY_k \underset{T\to \infty}{\longrightarrow} 0.
\end{equation}
The function $\left( h ^{-1} \right) ^{\otimes k} (X_k;Y_k) ^2$ is in $L^1 (\R^k \times \R ^k)$: it is positive and we easily check 
$$ \iint_{\R \times \R}  h ^{-1}   (x;y) ^2 dx dy = \tr\left[ h^{-2} \right] < \infty$$
by Proposition~\ref{lem:free}.
Therefore, if we can show that the kernel  $T^{-k} \Gamma_{\lambda, T} ^{(k)} (X_k;Y_k)$ converges pointwise, then it converges strongly in $L^2$ by  Lebesgue's Dominated Convergence Theorem (see the remark following~\cite[Theorem~1.8]{LieLos-01}). Then the operator $T^{-k}\Gamma_{\lambda, T} ^{(k)} $  will converge strongly in the Hilbert-Schmidt norm, as desired. 

To prove that the kernel $T^{-k} \Gamma_{\lambda, T} ^{(k)} (X_k;Y_k)$ converges pointwise, it suffices to show that the operator $T^{-k}\chi^{\otimes k}  \Gamma_{\lambda, T} ^{(k)} \chi^{\otimes k}$ converges strongly in the Hilbert-Schmidt norm when $\chi$ is a characteristic function of a ball. Indeed, we will prove a stronger statement

\begin{lemma}[\textbf{Local trace class convergence of density matrices}]\label{lem:strong-cv-chi-Gamma}\mbox{}\\
Let $\chi$ be the characteristic function of a ball. Then $T^{-k}\chi^{\otimes k}  \Gamma_{\lambda, T} ^{(k)} \chi^{\otimes k}$ converges strongly in the trace class for all $k\ge 1$.
\end{lemma}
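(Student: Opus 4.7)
The plan is to split the argument into three ingredients: (a) a uniform bound on $A_T:=T^{-k}\chi^{\otimes k}\Gamma_{\lambda,T}^{(k)}\chi^{\otimes k}$ in $\gS^1$; (b) weak-$*$ convergence $A_T\rightharpoonup^{*} A$ in $\gS^1$; and (c) convergence of traces $\tr A_T\to \tr A$. For positive operators, (b)+(c) is equivalent to strong $\gS^1$ convergence (see, e.g., \cite{Simon-79}), which is what we want.

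Ingredients (a) and (b) are routine. Integrating the diagonal of the kernel estimate in Proposition~\ref{pro:new bounds} and using \eqref{eq:dm-free-Gibbs} together with \eqref{eq:local-trace} gives the $T$-uniform bound
$$\tr\bigl[\chi^{\otimes k}\Gamma_{\lambda,T}^{(k)}\chi^{\otimes k}\bigr]\le C\, T^k\bigl(\tr[\chi h^{-1}\chi]\bigr)^k<\infty.$$
Then (b) follows from Banach--Alaoglu combined with the weak Hilbert--Schmidt convergence \eqref{eq:weak_limit_DM_whole_T}, which identifies $A:=(k!)^{-1}\chi^{\otimes k}\int |u^{\otimes k}\rangle\langle u^{\otimes k}|\,d\mu(u)\,\chi^{\otimes k}$ as the unique weak-$*$ cluster point in $\gS^1$.

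The heart of the proof is (c), for which I would use a Feynman--Hellmann / Taylor-remainder argument. Set $V:=T^{1-k}\sum_{i_1<\cdots<i_k}\chi(x_{i_1})\cdots\chi(x_{i_k})$ (a nonnegative $k$-body operator, second-quantised on $\gF$) and consider, for $\epsilon\ge 0$, the perturbed partition function $Z^\epsilon_{\lambda,T}:=\tr_\gF\exp(-T^{-1}(\bH_\lambda+\epsilon V))$ together with $f_T(\epsilon):=-\log(Z^\epsilon_{\lambda,T}/Z_{0,T})$, so that $f_T'(0)=T^{-1}\langle V\rangle_{\Gamma_{\lambda,T}}=\tr A_T$. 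Since $V\ge 0$, Proposition~\ref{pro:new bounds} still yields a $T$-uniform pointwise kernel bound on the perturbed reduced density matrices, and the entire proof of Theorem~\ref{thm:main} adapts---the new ingredient in the lower bound being a Fatou-type liminf for the extra nonnegative term applied to the weak-$*$ $\gS^1$ convergent compressed density matrix of the perturbed state---to give, for every fixed $\epsilon\ge 0$,
$$f_T(\epsilon)\;\xrightarrow[T\to\infty]{}\;f(\epsilon):=-\log\int\exp\bigl(-F_{\rm NL}[u]-\epsilon\|\chi u\|_{L^2}^{2k}/k!\bigr)\,d\mu_0(u),$$
which is smooth at $\epsilon=0$ with $f'(0)=\tr A$. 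To promote pointwise convergence of $f_T$ to convergence of $f_T'(0)$, I combine this with the uniform second-derivative bound
$$|f_T''(\epsilon)|\;=\;T^{-2}\mathrm{Var}_{\Gamma^\epsilon_{\lambda,T}}(V)\;\le\;C\qquad\text{for }\epsilon\in[0,\delta],$$
obtained by controlling $T^{-2}\langle V^2\rangle^\epsilon$ through Proposition~\ref{pro:new bounds} applied at order $2k$ (indeed, $V^2$ is a linear combination of $j$-body operators of the form $\chi^{\otimes j}$, $j\le 2k$, with $T^{2(1-k)}\tr[\chi^{\otimes 2k}\Gamma^{(2k,\epsilon)}_{\lambda,T}]\le CT^2$). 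A standard Taylor-remainder argument then yields
$$|\tr A_T-\tr A|\;\le\;\epsilon^{-1}\bigl(|f_T(\epsilon)-f(\epsilon)|+|f_T(0)-f(0)|\bigr)+C\,\epsilon,$$
which vanishes upon sending first $T\to\infty$, then $\epsilon\to 0^+$.

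The main technical obstacle lies in step~(c), namely verifying that the variational machinery of Theorem~\ref{thm:main} (upper bound by trial state, de~Finetti plus quantum-to-classical relative entropy plus Fatou in the lower bound) adapts uniformly to the $k$-body perturbation $\epsilon V$ and in establishing the uniform second-derivative bound on~$f_T$. Both are direct consequences of the robustness of the kernel estimate of Proposition~\ref{pro:new bounds} under nonnegative perturbations, so the work is essentially bookkeeping.
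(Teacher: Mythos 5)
Your steps (a) and (b) coincide with the opening of the paper's argument: the kernel bound of Proposition~\ref{pro:new bounds} together with $\Tr[\chi h^{-1}\chi]<\infty$ gives the uniform $\gS^1$ bound, and the weak Hilbert--Schmidt convergence then identifies the weak-$*$ trace-class limit. You are also right that, for nonnegative operators, the whole lemma reduces to showing $\tr A_T\to\tr A$. Where you genuinely diverge is in how you prove this convergence of traces. The paper does it softly: for $k=1$ it uses operator domination $0\le T^{-1}\chi\Gamma^{(1)}_{\lambda,T}\chi\le C\chi h^{-1}\chi\in\gS^1$ plus dominated convergence; for $k\ge2$ it views $\chi^{\otimes k}\Gamma^{(k)}_{\lambda,T}\chi^{\otimes k}$ as the density matrices of a localized state on $\cF(\chi L^2)$ and invokes an abstract bootstrap (Lemma~\ref{lem:cv-1-k}): once the \emph{one-body} localized density matrix converges strongly, a particle-number truncation argument (splitting $\1^{\otimes k}$ with projections $P,Q$ and a cutoff $M$) forces $\limsup\tr A_T\le\tr A$ for every $k$. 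Your route instead perturbs the Hamiltonian by the second-quantized $k$-body observable $\epsilon V$ and extracts $\tr A_T=f_T'(0)$ by Feynman--Hellmann, using concavity/a uniform second-derivative bound to pass derivatives through the limit. This is a legitimate and classical strategy (it is how one often extracts correlation functions from free energies), and your variance bound via the kernel estimate at order $2k$ is sound; but it buys nothing over the paper's argument and costs much more, since it requires re-running the entire variational scheme for a one-parameter family of $k$-body--perturbed Hamiltonians.

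That last point is where your write-up is under-justified. The convergence $f_T(\epsilon)\to f(\epsilon)$ for fixed $\epsilon>0$ is not a citation: the free-energy \emph{upper} bound (Lemma~\ref{lem:upper_bound}, i.e.\ \cite[Lemma 8.3]{LewNamRou-14d}) is proved by an explicit coherent-state trial computation written for two-body interactions, and extending it to the $k$-body observable $\chi^{\otimes k}$ (together with identifying the perturbed classical minimization problem and its unique minimizer $\mu_\epsilon$) is a genuine, if routine, piece of work rather than ``bookkeeping.'' Likewise the lower bound needs the Fatou step for the new $k$-body term applied to the \emph{perturbed} state's compressed density matrices, so steps (a)--(b) must themselves be redone uniformly in $\epsilon\in[0,\delta]$. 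None of this breaks, because the kernel estimate is stable under adding nonnegative multiplication operators, but you should either carry out this extension explicitly or replace step (c) by the lighter particle-number localization argument of Lemma~\ref{lem:cv-1-k}, which only uses the already-available strong convergence at $k=1$.
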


\begin{proof} From the kernel estimate \eqref{eq:proof strong CV}, we have
$$
T^{-k}\Tr\Big[\chi^{\otimes k} \Gamma^{(k)}_{\lambda,T} \chi^{\otimes k} \Big] \le C  T^{-k} \Tr\Big[\chi^{\otimes k} \Gamma^{(k)}_{0,T} \chi^{\otimes k}\Big] \le C \left(\Tr\left[\chi h^{-1}\chi\right]\right)^k<\infty.
$$
Recall that we have shown during the proof of Lemma~\ref{lem:free supp} that $\Tr\left[\chi h^{-1}\chi\right]<\ii$ for $\chi$ a characteristic function. Thus $T^{-k}\chi^{\otimes k} \Gamma^{(k)}_{\lambda,T} \chi^{\otimes k}$ is bounded in trace class, and hence the weak convergence in \eqref{eq:weak_limit_DM_n} implies that
\begin{equation}
\frac{k!}{T^k}\chi^{\otimes k}\Gamma_{\lambda,T}^{(k)} \chi^{\otimes k} \wto \int_{\gH^{1-p}} |(\chi u)^{\otimes k}\rangle \langle (\chi u)^{\otimes k}| \,d\mu(u)
\label{eq:weak_limit_DM_n_abc}
\end{equation}
weakly-$\ast$ in trace-class norm\footnote{On the right side of \eqref{eq:weak_limit_DM_n_abc}, $\chi u \in L^2$ when $u\in \supp \, \mu \subset  \supp \, \mu_0  \subset L^4(\R)$.}. 

There remains to show that the convergence in \eqref{eq:weak_limit_DM_n_abc} is strong in the trace class. In the case $k=1$, the strong convergence again follows from the Dominated Convergence Theorem (for operators) and the operator bound from Lemma \ref{lem:1PDM_bounded}:
$$
0\le T^{-1}\chi  \Gamma_{\lambda,T}^{(1)} \chi \le C \chi h^{-1} \chi \in \gS^1 (\gH ).
$$
In the case $k\ge 2$, we use a general observation which has its own interest, Lemma~\ref{lem:cv-1-k} below. We postpone the proof of this result and finish that of Lemma \ref{lem:strong-cv-chi-Gamma}. Using the Fock space isomorphism 
$$
\cF(L^2(\R))=\cF\left(\chi L^2(\R) \oplus (1-\chi) L^2(\R) \right) \simeq \cF\left(\chi L^2(\R)\right)\otimes\cF\left((1-\chi) L^2(\R)\right)
$$
we can define the localized state $\widetilde \Gamma_{\lambda,T}$ on $\cF(\chi L^2(\R))$ by taking the partial trace of $\Gamma_{\lambda,T}$ over $\cF((1-\chi) L^2(\R))$. The density matrices of the localized state $\widetilde \Gamma_{\lambda,T}$ are given by
$$\left(\widetilde \Gamma_{\lambda,T}\right)^{(k)}=\chi^{\otimes k}\Gamma_{\lambda,T}^{(k)} \chi^{\otimes k},\qquad \forall k\geq1.$$
This localization procedure is well-known for many-particle quantum systems; see for instance~\cite[Appendix A]{HaiLewSol_2-09},~\cite{Lewin-11} or~\cite[Chapter~5]{Rougerie-LMU} for more detailed discussions. 

Applying Lemma \ref{lem:cv-1-k} with $(\eps_n,\Gamma_n)$ replaced by $(1/T,\widetilde \Gamma_{\lambda,T})$, we obtain the desired conclusion of Lemma \ref{lem:strong-cv-chi-Gamma}. 
\end{proof}

The general lemma we used above is as follows:

\begin{lemma}[\textbf{Strong convergence of higher density matrices}]\label{lem:cv-1-k}\mbox{}\\
Let $\gH$ be a separable Hilbert space and let $\{\Gamma_n\}$ be a sequence of states on the bosonic Fock space $\gF(\gH)$. Assume that there exists a sequence $0<\epsilon_n\to0$ and operators $\gamma^{(k)}$ such that 
\begin{equation}\label{eq:deF-trace}
(\eps_n)^k\,\Gamma_n^{(k)}\wto \gamma^{(k)}
\end{equation}
weakly-$\ast$ in trace class on $\bigotimes_{\rm sym}^k \gH$ for all $k\in \mathbb{N}$. If the convergence \eqref{eq:deF-trace} holds strongly in trace class for $k=1$, then it holds strongly in trace class for all $k\in \mathbb{N}$. 
\end{lemma}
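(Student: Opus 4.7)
My plan is to reduce to a convergence of traces via a standard operator-theoretic fact: for positive trace-class operators $A_n\ge 0$ with $A_n\wto A$ weakly-$\ast$ in $\gS^1$, strong $\gS^1$-convergence is equivalent to $\tr A_n \to \tr A$ (a positive-operator version of Gr\"umm's theorem, combining weak convergence with Radon-Riesz). Hence the task becomes to prove
\begin{equation*}
(\eps_n)^k\,\tr[\Gamma_n^{(k)}] \longrightarrow \tr[\gamma^{(k)}] \qquad \text{for every } k\geq 2,
\end{equation*}
the case $k=1$ being already contained in the hypothesis.

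To establish this convergence I would use a finite-rank localization. Let $\cN$ denote the number operator on $\gF(\gH)$, so that $\tr\Gamma_n^{(k)} = \tr_\gF[\Gamma_n \binom{\cN}{k}]$. Pick an orthogonal projection $P$ of finite rank on $\gH$ and set $Q=\1-P$, $\cN_P = \dGamma(P)$, $\cN_Q = \dGamma(Q)$; these commute and $\cN = \cN_P+\cN_Q$. The Fock-space factorization $\gF(\gH)\simeq \gF(P\gH)\otimes \gF(Q\gH)$ combined with partial tracing over the second factor gives the localization identity $\tr[\Gamma_n^{(k)} P^{\otimes k}] = \tr_\gF[\Gamma_n \binom{\cN_P}{k}]$, and Vandermonde's identity then yields
\begin{equation*}
\tr\Gamma_n^{(k)} - \tr[\Gamma_n^{(k)} P^{\otimes k}] \;=\; \sum_{j=0}^{k-1} \tr_\gF\Big[\Gamma_n \,\binom{\cN_P}{j}\binom{\cN_Q}{k-j}\Big].
\end{equation*}
The ``localized'' piece is easy to handle: since $P^{\otimes k}$ is compact, the weak-$\ast$ hypothesis yields $(\eps_n)^k \tr[\Gamma_n^{(k)}P^{\otimes k}] \to \tr[\gamma^{(k)}P^{\otimes k}]$, which increases to $\tr[\gamma^{(k)}]$ as $P\nearrow \1$ by monotone convergence for positive trace-class operators.

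The remaining work is to control the ``delocalized'' remainder uniformly in $n$. Here I would use the pointwise operator bound $\binom{\cN_P}{j}\binom{\cN_Q}{k-j} \leq \cN^{k-1}\cN_Q / (j!(k-j)!)$, valid for $k-j\geq 1$, and a truncation at a level $R>0$. Writing $X_n := \eps_n \cN \geq \eps_n \cN_Q$ and applying Markov gives
\begin{equation*}
(\eps_n)^k \tr_\gF[\Gamma_n \cN^{k-1}\cN_Q] \;\leq\; R^{k-1}\,\eps_n \tr[\Gamma_n^{(1)} Q] \;+\; R^{-1}\,\tr_\gF[\Gamma_n X_n^{k+1}].
\end{equation*}
The second summand is $O(R^{-1})$ uniformly in $n$: the Stirling identity $\cN^m = \sum_{j}S(m,j)\,j!\binom{\cN}{j}$ converts the weak-$\ast$ hypothesis at orders $j\le k+1$ into the uniform bound $\tr_\gF[\Gamma_n X_n^{k+1}] = O(1)$. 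The first summand is controlled by the strong $\gS^1$-convergence at $k=1$, which gives $\eps_n \tr[\Gamma_n^{(1)} Q] \to \tr[\gamma^{(1)} Q]$, itself tending to $0$ as $P\nearrow\1$ since $\gamma^{(1)}\in\gS^1$. Sending $n\to \infty$, then $P\nearrow\1$, then $R\to \infty$ concludes.

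The main obstacle I anticipate is precisely that $\cN$ is unbounded, so no clean pointwise operator bound such as $\cN^{k-1}\cN_Q \leq C\cN_Q$ is available. The truncation above circumvents this by trading the small first moment of $\cN_Q$ (supplied by the strong $k=1$ convergence) against a uniform higher-moment control on $\cN$ (supplied by the weak-$\ast$ hypothesis at orders up to $k+1$).
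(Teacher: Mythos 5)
Your proposal is correct and follows essentially the same route as the paper's proof: reduce strong convergence to $\limsup_n(\eps_n)^k\tr\Gamma_n^{(k)}\le\tr\gamma^{(k)}$, localize with a finite-rank projection $P$, and control the $Q$-remainder by interpolating between the first moment of $\cN_Q$ (supplied by the strong $k=1$ convergence) and the $(k+1)$-th moment of $\eps_n\cN$ (supplied by the uniform trace bounds from weak-$\ast$ convergence). Your Vandermonde/Fock-factorization bookkeeping and spectral truncation of $\eps_n\cN$ at level $R$ are just a repackaging of the paper's telescoping decomposition of $\1^{\otimes k}$ and its splitting of the particle-number sum at $\eps_n m\le M$.
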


The equivalent of this lemma for states with a fixed number of particles is a straightforward consequence of the weak quantum de Finetti theorem~\cite[Section~2]{LewNamRou-14}.

\begin{proof}
The strong convergence in \eqref{eq:deF-trace} follows from the fact that
\begin{equation} \label{eq:k-s-upper}
\limsup_{n\to \infty} (\eps_n)^k \Tr [\Gamma_n^{(k)}] \leq \Tr \gamma^{(k)}.
\end{equation}
We will show that if \eqref{eq:k-s-upper} holds for $k=1$, then it holds for all  $k \ge 2$. Let $0\le P \le 1$ be a finite rank projection on $\gH$ and let $Q=1-P$. We can decompose
\begin{align*}
\1^{\otimes k}&= Q\otimes \1^{\otimes k-1} + P\otimes \1^{\otimes k-1}    \\ 
&= Q\otimes \1^{\otimes k-1} + P\otimes Q \otimes \1^{\otimes k-2} + P^{\otimes 2} \otimes \1^{\otimes k-2} = \cdots \\  
&=Q \otimes \1^{\otimes k-1} + P\otimes Q \otimes \1^{\otimes k-2} + P^{\otimes 2} \otimes Q \otimes \1^{\otimes k-3} + \cdots + P^{\otimes k-1} \otimes Q + P^{\otimes k} \\
& \le Q \otimes \1^{\otimes k-1} + \1 \otimes Q \otimes \1^{\otimes k-2} + \1^{\otimes 2} \otimes Q \otimes \1^{\otimes k-3} + \cdots + \1^{\otimes k-1} \otimes Q + P^{\otimes k}.
\end{align*}
Therefore,
\begin{equation} \label{eq:Tr-Gk-decomp}
(\eps_n)^k \Tr \left[\Gamma_n^{(k)}\right]  \le  (\eps_n)^k \Tr \left[P^{\otimes k} \Gamma_n^{(k)}\right] + k (\eps_n)^k \Tr\left[ \left(Q \otimes \1^{\otimes k-1}\right)\Gamma_n^{(k)}\right].
\end{equation}
Now we estimate the right side of \eqref{eq:Tr-Gk-decomp}. The weak convergence in \eqref{eq:deF-trace} implies that 
\begin{equation} \label{eq:Tr-Gk-decomp-1}
\lim_{n\to \infty}  (\eps_n)^k \Tr \left[P^{\otimes k} \Gamma_n^{(k)}\right]  =  \Tr \left[P^{\otimes k} \gamma^{(k)}\right] \le \Tr \gamma^{(k)}.
\end{equation}
To estimate the second term on the right side of \eqref{eq:Tr-Gk-decomp}, we use the definition
$$
\Gamma_n^{(k)} = \sum_{m \ge k} {m \choose k} \Tr_{k+1\to m} G_n ^m
$$
with $G_{n} ^m$ the projection of $\Gamma_n$ onto $\bigotimes_{\rm sym}^m \gH$, namely 
$$
\Gamma_n = G_{n} ^0 \oplus G_{n} ^1 \oplus G_{n} ^2 \oplus \cdots,
$$
and $\Tr_{k+1\to m} G_{n} ^m$ is the partial trace of $G_{n} ^m$ with respect to $m-k$ variables\footnote{No matter which, by bosonic symmetry.}. In particular,
\begin{align*}
(\eps_n)^k \Tr\left[\left(Q \otimes \1^{\otimes k-1}\right)\Gamma_n^{(k)}\right] &=  (\eps_n)^k \sum_{m\ge k} {m \choose k} \Tr\left[ \left(Q\otimes \1\right) \Tr_{2\to m} G_{n} ^m \right] \\
&\le  \sum_{m\ge 2} (\eps_n m)^k  \Tr\left[ \left(Q\otimes \1\right) \Tr_{2\to m} G_{n} ^m \right]. 
\end{align*}
Let $M\ge 1$ and divide the sum into two parts: $\eps_n m \le M$ and $\eps_n m>M$. Then, using
$$
 (\eps_n m)^k \le \left\{
 \begin{aligned}
 M^{k-1} (\eps_n m )& \quad \text{ if \quad $\eps_n m \le M$,} \\
 M^{-1} (\eps_n m)^{k+1} \quad & \quad\text{ if \quad $\eps_n m>M$,}
 \end{aligned}
 \right.
$$
we can estimate
\begin{align*}
&(\eps_n)^k \Tr\left[(Q \otimes \1^{\otimes k-1})\Gamma_n^{(k)}\right] \\
& \le M^{k-1}  \sum_{m \ge 2} (\eps_n m)  \Tr\left[ (Q\otimes \1) \Tr_{2\to m} G_{n} ^m \right]  +  M^{-1} \sum_{m \ge 2} (\eps_n m)^{k+1}  \Tr\left[ G_{n} ^m \right] \\
&\le  M^{k-1} \eps_n \Tr \left[Q \Gamma_{n}^{(1)}\right] +  M^{-1} \Tr \left[ \left(\eps_n \cN\right)^{k+1} \Gamma_n\right].
\end{align*}
Here $\cN$ is the usual number operator on the Fock space $\gH$. Since $\eps_n \Gamma_n^{(1)}$ converges strongly in trace class, we get
$$
\lim_{n\to \infty} \eps_n \Tr \left[Q \Gamma_{n}^{(1)}\right] =  \Tr \left[ Q \gamma^{(1)} \right].
$$
On the other hand, since $(\eps_n)^\ell \Gamma_n^{(\ell)}$ converges weakly-$\ast$ in trace class, its trace is bounded uniformly in $n$. Combining with the identity
$$
\Tr  \Gamma_n^{(\ell)}= \Tr_{\cF(\gH)} \left[ {\cN \choose \ell} \Gamma_n \right] , \quad \forall \ell\ge 1
$$
we find that 
$$
\limsup_{n\to \infty} \Tr \left[ (\eps_n \cN)^{\ell} \Gamma_n\right] \le C_\ell, \quad \forall \ell\ge 1
$$
for a constant $C_\ell$ independent of $n$. Thus we have shown that
\begin{equation} \label{eq:Tr-Gk-decomp-2}
\limsup_{n\to \infty} (\eps_n)^k \Tr\left[(Q \otimes \1^{\otimes k-1})\Gamma_n^{(k)}\right] \le M^{k-1}  \Tr \left[Q \gamma^{(1)} \right] + \frac{C_k}{M}.
\end{equation}
In summary, inserting \eqref{eq:Tr-Gk-decomp-1} and \eqref{eq:Tr-Gk-decomp-2} into \eqref{eq:Tr-Gk-decomp} we obtain
\begin{align*}
\limsup_{n\to \infty} (\eps_n)^k \Tr \left[\Gamma_n^{(k)}\right]  \le  \Tr \gamma^{(k)} + k M^{k-1} \Tr \left[ Q \gamma^{(1)}\right]+ \frac{kC_k}{M}
\end{align*}
for all projections $Q$, all $M\ge 1$ and all $k\ge 2$. It remains to take $P \to \1$, then $M\to \infty$, to conclude that   
\begin{align*}
\limsup_{n\to \infty} (\eps_n)^k \Tr [\Gamma_n^{(k)}]  \le \Tr \gamma^{(k)} .
\end{align*}
The proof is complete.  
\end{proof}

By the same proof, we can show that if \eqref{eq:deF-trace} holds weakly-$\ast$ in trace class for all $1\le k \le \kappa$  and strongly in trace class for $k=1$, then it holds strongly in trace class for all $1\le k \le \kappa-1$.

%
%

\end{document}